\tikzset{
	>=stealth',
	punktchain/.style={
		rectangle,
		rounded corners,
		draw=black, very thick,
		text width=6em,
		minimum height=3em,
		text centered,
		on chain},
	line/.style={draw, thick, <-},
	element/.style={
		tape,
		top color=white,
		bottom color=blue!50!black!60!,
		minimum width=8em,
		draw=blue!40!black!90, very thick,
		text width=10em,
		minimum height=3.5em,
		text centered,
		on chain},
	every join/.style={->, thick,shorten >=1pt},
	decoration={brace},
	tuborg/.style={decorate},
	tubnode/.style={midway, right=2pt},
}
\newif\iflong
\newif\ifmaybe
\begin{document}

\newcommand{\A}{{\mathcal A}}
\newcommand{\B}{{\mathcal B}}
\newcommand{\D}{{\mathcal D}}
\newcommand{\G}{{\mathcal G}}
\newcommand{\R}{{\mathcal R}}
\newcommand{\V}{{\mathcal V}}
\newcommand{\Pref}{\mathit{Pref}}
\newcommand{\Val}{\mathit{Val}}
\newcommand{\Var}{\mathit{Var}}

\newcommand{\setp}{\mathit{setp}}
\newcommand{\gain}{\mathit{gain}}
\newcommand{\sens}{\mathit{sens}}
\newcommand{\cntr}{\mathit{cntr}}
\newcommand{\reset}{\mathit{reset}}
\newcommand{\setpoint}{\mathit{sp}}
\newcommand{\sensorvalue}{\mathit{sv}}

\newcommand{\nat}{\mathbb{N}}
\newcommand{\arity}{\mathit{arity}}
\newcommand{\guard}{\mathit{guard}}
\newcommand{\length}{\mathit{length}}
\newcommand{\matching}{\mathit{matching}}
\newcommand{\satisfiable}{\mathit{Sat}}

\newcommand{\btr}{\mathit{symb}}
\newcommand{\run}{\mathit{run}}
\newcommand{\symbolic}{\mathit{symb}}
\newcommand{\trace}{\mathit{trace}}
\newcommand{\strace}{\mathit{strace}}
\newcommand{\domain}{\mathit{domain}}
\newcommand{\range}{\mathit{range}}

\newtheorem{theorem}{Theorem}
\newtheorem{lemma}[theorem]{Lemma}
\newtheorem{corollary}[theorem]{Corollary}
\newtheorem{definition}[theorem]{Definition}
\newtheorem{example}[theorem]{Example}
\newproof{proof}{Proof}

\title{A Myhill-Nerode Theorem for Register Automata and Symbolic Trace Languages\tnoteref{t1}}
\tnotetext[t1]{Supported by NWO TOP project 612.001.852 Grey-box learning of Interfaces for Refactoring Legacy Software (GIRLS).}

\author[nijmegen]{Frits Vaandrager}
\ead{F.Vaandrager@cs.ru.nl}

\author[bangalore]{Abhisek Midya\fnref{Radboud}}
\ead{abhisekmidyacse@gmail.com}

\address[nijmegen]{Institute for Computing and Information Sciences, Radboud University, Toernooiveld 212, 6525 EC, Nijmegen, The Netherlands}
\address[bangalore]{Department of Information Science and Engineering, CMRIT, Bangalore, India}

\fntext[Radboud]{Work on this article was carried out while the author was employed at Radboud University.}

\begin{abstract}
	We propose a new symbolic trace semantics for register automata (extended finite state machines) which records both the sequence of input symbols that occur during a run as well as the constraints on input parameters that are imposed by this run.
	Our main result is a generalization of the classical Myhill-Nerode theorem to this symbolic setting.
	Our generalization requires the use of three relations to capture the additional structure of register automata. Location equivalence $\equiv_l$ captures that symbolic traces end in the same location, transition equivalence $\equiv_t$ captures that they share the same final transition, and  a partial equivalence relation $\equiv_r$ captures that symbolic values $v$ and $v'$ are stored in the same register after symbolic traces $w$ and $w'$, respectively.
    A symbolic language is defined to be regular if relations $\equiv_l$, $\equiv_t$ and $\equiv_r$ exist that satisfy certain conditions, in particular, they all have finite index.
    We show that the symbolic language associated to a register automaton is regular, and we construct, for each regular symbolic language, a register automaton that accepts this language.
    Our result provides a foundation for grey-box learning algorithms in settings where the constraints on data parameters can be extracted from code using e.g.\ tools for symbolic/concolic execution or tainting.
    We believe that moving to a grey-box setting is essential to overcome the scalability problems of state-of-the-art black-box learning algorithms. 
\end{abstract}

\begin{keyword}
	register automata \sep
	symbolic semantics \sep
	Myhill-Nerode theorem \sep
	automata learning \sep
	model learning \sep
	grey-box learning
\end{keyword}

\maketitle

\section{Introduction}
Model learning (a.k.a. active automata learning) is a black-box technique which constructs state machine models of software and hardware components from information obtained by providing inputs and observing the resulting outputs. 
Model learning has been successfully used in numerous applications, for instance for
generating conformance test suites of software components \cite{HMNSBI2001},
finding mistakes in implementations of security-critical protocols \cite{FJV16,FiterauEtAl17,FH17},
learning interfaces of classes in software libraries \cite{HowarISBJ12}, and
checking that a legacy component and a refactored implementation have the same behavior \cite{SHV16}.
We refer to \cite{Vaa17,HowarS2018} for surveys and further references.

Myhill-Nerode theorems \cite{nerode58,HU79} are of pivotal importance for model learning algorithms. Angluin's classical $L^{\ast}$ algorithm \cite{Ang87} for active learning of regular languages, as well as improvements such as \cite{RivestS93,Isberner2014,ShahbazG09}, use an observation table to approximate the Nerode congruence.
Maler and Steiger \cite{MalerS97} established a Myhill-Nerode theorem for $\omega$-languages that serves as a basis for
a learning algorithm described in \cite{AngluinF16}.
The $\mathit{SL}^{\ast}$ algorithm for active learning of register automata of Cassel et al \cite{CasselHJS16} is directly based on a generalization of the classical Myhill-Nerode theorem to a setting of data languages and register automata (extended finite state machines).
Francez and Kaminski \cite{FrancezK03}, Benedikt et al \cite{BenediktLP10} and Boja\'nczyk et al \cite{BojanczykKL11}
all present Myhill-Nerode theorems for data languages.
	
Despite the convincing applications of black-box model learning, it is fair to say that existing algorithms do not scale very well.  In order to learn models of realistic applications in which inputs and outputs carry data parameters,
state-of-the-art techniques either rely on manually constructed mappers that abstract the data parameters of
inputs and outputs into a finite alphabet \cite{AJUV15}, or otherwise infer guards
and assignments from black-box observations of test outputs \cite{CasselHJS16,CEGAR12}.
The latter can be costly, especially for models where the control flow depends on data parameters in the input.
Thus, for instance, the RALib tool \cite{CasselRALib}, an implementation of the $\mathit{SL}^{\ast}$ algorithm, needed more than two hundred thousand input/reset events to learn register automata with just 6 to 8 locations for TCP client implementations of Linux, FreeBSD and Windows \cite{FH17}.  Existing black-box model learning algorithms also face severe restrictions on the operations and predicates on data that are supported (typically, only equality/inequality predicates and constants).

A natural way to address these limitations is to augment learning algorithms with white-box information extraction methods,
which are able to obtain information about the system under learning at lower cost than black-box techniques \cite{HowarJV19}. 
Constraints on data parameters can be extracted from the code using e.g.\ tools for symbolic execution \cite{SymbEx},
concolic execution \cite{DART}, or tainting \cite{HoscheleZ16}.
Several researchers have successfully explored this idea, see for instance \cite{acm2414956,acm2028077,BotincanB13,acm2483783}.
Recently, we showed how constraints on data parameters can be extracted from Python programs using tainting, and used to 
boost the performance of RALib with almost two orders of magnitude.
We were also able to learn models of systems that are completely out of reach of black-box techniques, such as ``combination locks'', systems that only exhibit certain behaviors after a very specific sequence of inputs  \cite{TaintingRALib}.
Nevertheless, all these approaches are rather ad hoc, and what is missing is Myhill-Nerode theorem for this enriched settings that may serve as a foundation for grey-box model learning algorithms for a general class of register automata.
In this article, we present such a theorem.

More specifically, we propose a new symbolic trace semantics for register automata which records both the sequence of input symbols that occur during a run as well as the constraints on input parameters that are imposed by this run.
Our main result is a Myhill-Nerode theorem for symbolic trace languages.
Whereas the original Myhill-Nerode theorem refers to a single equivalence relation $\equiv$ on words, and constructs a DFA in which states are equivalence classes of $\equiv$, our generalization requires the use of three relations to capture the additional structure of register automata. Location equivalence $\equiv_l$ captures that symbolic traces end in the same location, transition equivalence $\equiv_t$ captures that they share the same final transition, and  a partial equivalence relation $\equiv_r$ captures that symbolic values $v$ and $v'$ are stored in the same register after symbolic traces $w$ and $w'$, respectively.
A symbolic language is defined to be regular if relations $\equiv_l$, $\equiv_t$ and $\equiv_r$ exist that satisfy certain conditions, in particular, they all have finite index.
Whereas in the classical case of regular languages the Nerode equivalence $\equiv$ is uniquely determined, different relations relations $\equiv_l$, $\equiv_t$ and $\equiv_r$ may exist that satisfy the conditions for 
regularity for symbolic languages.
We show that the symbolic language associated to a register automaton is regular, and we construct, for each regular symbolic language, a register automaton that accepts this language. In this automaton, the locations are equivalence classes of $\equiv_l$, the transitions are equivalence classes of $\equiv_t$, and the registers are
equivalence classes of $\equiv_r$. In this way, we obtain a natural generalization of the classical Myhill-Nerode theorem for symbolic languages and register automata. Unlike Cassel et al \cite{CasselHJS16}, we need no restrictions on the allowed data predicates to prove our result, which drastically increases the range of potential applications.
Our result paves the way for efficient grey-box learning algorithms in settings where the constraints on data parameters can be extracted from the code.

\iflong
\else
Due to the page limit, proofs have been omitted from this article, except for outlines of the proofs of main Theorems ~\ref{theorem soundness} and~\ref{theorem completeness}. All proofs can be found in the report version on arXiv \cite{VM20}.
\fi

\section{Preliminaries}
\par In this section, we fix some basic vocabulary for (partial) functions, languages, and logical formulas.

\subsection{Functions}
We write $f : X \rightharpoonup Y$ to denote that $f$ is a partial function from set $X$ to
set $Y$.  For $x \in X$, we write $f(x) \downarrow$ if there exists a $y \in Y$ such that $f(x)=y$, i.e., the result
is defined, and $f(x) \uparrow$ if the result is undefined.
We write $\domain(f) = \{ x \in X \mid f(x)\downarrow \}$ and $\range(f) = \{ f(x) \in Y \mid x \in \domain(f) \}$.
We often identify a partial function $f$ with the set of pairs $\{ (x,y) \in X \times Y \mid f(x)=y \}$.
As usual, we write $f : X \rightarrow Y$ to denote that $f$ is a total function from $X$ to $Y$, that is, $f : X \rightharpoonup Y$ and $\domain(f) = X$.

\subsection{Languages}

Let $\Sigma$ be a set of \emph{symbols}.
A \emph{word} $u = a_1 \ldots a_n$ over $\Sigma$ is a finite sequence of symbols from $\Sigma$.
The \emph{length} of a word $u$, denoted $|u|$ is the number of symbols occurring in it.
The empty word is denoted $\epsilon$.
We denote by $\Sigma^{\ast}$ the set of all words over $\Sigma$.
Given two words $u$ and $w$, we denote by $u \cdot w$ the concatenation of $u$ and $w$.
When the context allows it, $u \cdot w$ shall be simply written $u w$.
We say that $u$ is a \emph{prefix} of $w$ iff there exists a word $u'$ such that $u \cdot u' = w$.
Similarly, $u$ is a \emph{suffix} of $w$ iff there exists a word $u'$ such that $u' \cdot u  = w$.
A \emph{language} $L$ over $\Sigma$ is any set of words over $\Sigma$, so therefore a subset of $\Sigma^{\ast}$.
We say that $L$ is prefix closed if, for each $w \in L$ and each prefix $u$ of $w$, $u \in L$ as well.

\subsection{Guards}
We postulate a countably infinite set $\V = \{ v_1, v_2,\ldots \}$ of \emph{variables}.
In addition, there is also a variable $p \not\in\V$ that will play a special role as formal parameter of input symbols; we write $\V_p = \V \cup \{ p \}$.
Our framework is parametrized by a set $R$ of relation symbols. Elements of $R$ are assigned finite \emph{arities}.
A {\em guard} is a Boolean combination of relation symbols from $R$ over variables.
Formally, the set of \emph{guards} is inductively defined as follows:
\begin{itemize}
	\item 
	$\top$ is a guard.
    \item 
    If $r\in R$ is an $n$-ary relation symbol and $x_1 ,\ldots, x_n$ are variables from $\V_p$, then $r(x_1,\ldots,x_n)$ is a guard.
    \item
    If $g$ is a guard then $\neg g$ is a guard.
    \item
    If $g_1$ and $g_2$ are guards then $g_1 \wedge g_2$ is a guard.
\end{itemize}
We use standard abbreviations from propositional logic such as $g_1 \vee g_2$.
We write $\Var(g)$ for the set of variables that occur in a guard $g$.
We say that $g$ is a guard \emph{over} set of variables $X$ if $\Var(g) \subseteq X$.
We write $\G (X)$ for the set of guards over $X$, and use symbol $\equiv$ to denote syntactic equality of guards.

We postulate a structure $\R$ consisting of a set $\D$ of \emph{data values} and a distinguished $n$-ary relation $r^{\R} \subseteq \D^n$ for each $n$-ary relation symbol $r \in R$.
In a trivial example of a structure $\R$, $R$ consists of the binary symbol `$=$', $\D$ the set of natural numbers, and
$=^{\R}$ is the equality predicate on numbers.
An $n$-ary operation $f : \D^n \rightarrow \D$ can be modelled in our framework as a predicate of arity $n+1$.
We may for instance extend structure $\R$ with a ternary predicate symbol $+$, where $(d_1, d_2, d_3) \in +^{\R}$ iff the sum of $d_1$ and $d_2$ equals $d_3$. Constants like $0$ and $1$ can be added to $\R$ as unary predicates.

A \emph{valuation} is a partial function $\xi : \V_p \rightharpoonup \D$ that assigns data values to variables.
If $\Var(g) \subseteq\domain(\xi)$, then $\xi \models g$ is defined inductively by:
\begin{itemize}
	\item 
	$\xi \models \top$
	\item 
	$\xi \models r(x_1,\ldots,x_n)$  iff $(\xi(x_1),\ldots,\xi(x_n)) \in r^{\R}$
	\item 
	$\xi \models \neg g$  iff not $\xi \models g$
	\item 
	$\xi \models g_1 \wedge g_2$ iff $\xi \models g_1$ and $\xi \models g_2$
\end{itemize}
If $\xi \models g$ then we say valuation $\xi$ \emph{satisfies} guard $g$.
We call $g$ is \emph{satisfiable}, and write $\satisfiable(g)$, if there exists a valuation $\xi$ such that $\xi\models g$.
Guard $g$ is a \emph{tautology} if $\xi\models g$ for all valuations $\xi$ with $\Var(g) \subseteq\domain(\xi)$.
%

A \emph{variable renaming} is a partial function $\sigma: \V_p \rightharpoonup \V_p$.
If $g$ is a guard with $\Var(g) \subseteq \domain(\sigma)$ then $g[\sigma]$ is the guard obtained by replacing each occurrence of a variable $x$ in $g$ by variable $\sigma(x)$.
The following lemma is easily proved by induction.

\begin{lemma}
\label{lemma variable renaming}
$\xi \circ \sigma \models g$ iff $\xi \models g[\sigma]$
\end{lemma}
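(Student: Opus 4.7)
The plan is to proceed by structural induction on the guard $g$, exactly following the four clauses of the inductive definition of guards given earlier. For this to make sense I first want to note the implicit well-definedness side condition: from $\Var(g) \subseteq \domain(\sigma)$ we get $\Var(g[\sigma]) = \sigma(\Var(g)) \subseteq \range(\sigma)$, and the two relations $\xi\circ\sigma\models g$ and $\xi\models g[\sigma]$ are defined precisely when $\sigma(\Var(g)) \subseteq \domain(\xi)$, so both sides are simultaneously well-defined and there is no asymmetry to worry about during the induction.

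For the base case $g \equiv \top$, the substitution gives $g[\sigma] \equiv \top$, and both $\xi\circ\sigma \models \top$ and $\xi \models \top$ hold unconditionally, so the equivalence is trivial. For the atomic base case $g \equiv r(x_1,\ldots,x_n)$, the definition of substitution yields $g[\sigma] \equiv r(\sigma(x_1),\ldots,\sigma(x_n))$, so unfolding the definition of $\models$ I get $\xi \models g[\sigma]$ iff $(\xi(\sigma(x_1)),\ldots,\xi(\sigma(x_n))) \in r^{\R}$, which is literally $((\xi\circ\sigma)(x_1),\ldots,(\xi\circ\sigma)(x_n)) \in r^{\R}$, i.e.\ $\xi\circ\sigma \models g$. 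This is the only step where the structure $\R$ is actually used.

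The two inductive steps are routine propagations of the induction hypothesis through the Boolean connectives. For $g \equiv \neg g'$ I use that $(\neg g')[\sigma] \equiv \neg(g'[\sigma])$ syntactically, so $\xi \models g[\sigma]$ iff $\xi \not\models g'[\sigma]$ iff (by IH) $\xi\circ\sigma \not\models g'$ iff $\xi\circ\sigma \models \neg g'$. For $g \equiv g_1 \wedge g_2$ I use that $(g_1\wedge g_2)[\sigma] \equiv g_1[\sigma] \wedge g_2[\sigma]$ and apply the IH to each conjunct.

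There is no genuine obstacle here — the statement is essentially the standard substitution lemma of first-order logic specialized to quantifier-free Boolean combinations, and because guards contain no binders there is no capture issue to navigate. The only item that deserves a careful line in the write-up is the bookkeeping about the domain of $\xi$ versus the domain of $\xi \circ \sigma$, which I would handle once at the start rather than re-checking at every inductive step.
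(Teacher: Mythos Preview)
Your proposal is correct and follows essentially the same approach as the paper's own proof: a straightforward structural induction on $g$ over the four clauses $\top$, $r(x_1,\ldots,x_n)$, $\neg g'$, and $g_1 \wedge g_2$. The only addition in your write-up is the explicit well-definedness remark about domains, which the paper leaves implicit.
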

\iflong
\begin{proof}
By induction on structure of $g$:
\begin{itemize}
	\item 
	$g \equiv \top$: Statement follows because $\xi \circ \sigma \models \top$ and, as $\top[\sigma] \equiv \top$, $\xi \models \top[\sigma]$.
	\item 
	$g \equiv r(x_1,\ldots, x_n)$:
	\begin{eqnarray*}
	\xi \circ \sigma \models g & \Leftrightarrow & (\xi \circ \sigma(x_1),\ldots,\xi \circ\sigma (x_n)) \in r^{\R}\\
	& \Leftrightarrow & (\xi ( \sigma(x_1)),\ldots,\xi (\sigma (x_n))) \in r^{\R}\\
	& \Leftrightarrow & \xi \models r(\sigma(x_1),\ldots,\sigma(x_n))\\
	& \Leftrightarrow & \xi \models g[\sigma]
	\end{eqnarray*}
		\item 
	$g \equiv \neg g'$:
	\begin{eqnarray*}
		\xi \circ \sigma \models g & \Leftrightarrow &   \mbox{ not } \xi \circ \sigma \models g'\\
		& \Leftrightarrow & \mbox{ not } \xi \models g'[\sigma] \mbox{ (by induction hypothesis)}\\
		& \Leftrightarrow & \xi \models \neg g'[\sigma]\\
		& \Leftrightarrow & \xi \models g[\sigma]
	\end{eqnarray*}
\item 
$g \equiv g_1 \wedge g_2$:
\begin{eqnarray*}
	\xi \circ \sigma \models g & \Leftrightarrow & \xi \circ \sigma \models g_1 \mbox{ and } \xi \circ \sigma \models g_2 \mbox{ (by induction hypothesis)}\\
	& \Leftrightarrow & \xi \models g_1[\sigma] \mbox{ and } \xi \models g_2 [\sigma]\\
	& \Leftrightarrow & \xi \models g_1[\sigma] \wedge g_2[\sigma]\\
	& \Leftrightarrow & \xi \models g[\sigma]  \hspace{7cm} \square
\end{eqnarray*}
\end{itemize}
\end{proof}
\fi

\section{Register Automata}
In this section, we introduce register automata and show how they may be used as recognizers for both data languages and symbolic languages. 

\subsection{Definition and trace semantics}
A register automaton comprises a set of locations with transitions between them, and a set of registers which can store data values that are received as inputs. Transitions contain guards over the registers and the current input, and may
assign new values to registers.

\begin{definition}
A \emph{register automaton} is a tuple $\A = (\Sigma, Q, q_0, V, \Gamma)$, where
\begin{itemize}
\item
$\Sigma$ is a finite set of \emph{input symbols},
\item $Q$ is a finite set of \emph{locations}, with $q_0 \in Q$ the \emph{initial location},
\item $V \subset \V$ is a finite set of \emph{registers}, and 
\item $\Gamma$ is a finite set of \emph{transitions}, each of form $\langle q, \alpha, g, \varrho, q'\rangle$ where 
\begin{itemize}
\item $q,q'\in Q$ are the \emph{source} and \emph{target} locations, respectively,
\item $\alpha \in\Sigma$ is an input symbol,
\item $g \in \G(V \cup \{ p \})$ is a guard, and
\item $\varrho : V \rightharpoonup V\cup \{ p \}$ is an \emph{assignment}; we require that $\varrho$ is injective.
\end{itemize}
\end{itemize}
Register automata are required to be {\em deterministic} in the sense
that for each location $q \in Q$ and input symbol $\alpha \in \Sigma$,
the conjunction of the guards of any pair of distinct $\alpha$-transitions with source $q$ is not satisfiable.
We write $q \xrightarrow{\alpha, g, \varrho} q'$ if $\langle q, \alpha, g, \varrho, q'\rangle \in \Gamma$.
\end{definition}

\begin{example}
\label{running example}
Figure~\ref{fig: register automaton} shows a register automaton $\A = (\Sigma, Q, q_0, V, \Gamma)$ with
a single input symbol $a$ and three locations $q_0$, $q_1$ and $q_2$.
\begin{figure}[h!]
\begin{center}
\begin{tikzpicture}[shorten >=1pt,node distance=3.3cm,on grid,auto] 
   \node[state,initial,accepting] (q_0)   {$q_0$}; 
   \node[state,accepting] (q_1) [ right=of q_0] {$q_1$}; 
   \node[state,accepting](q_2) [right=of q_1] {$q_2$};
    \path[->]
    (q_0) edge node {$a, x:=p$} (q_1)
    (q_1) edge [loop above] node {$a,  x\leq p, x:=p$} ()
    (q_1) edge  node {$a, p<x, x:=p$} (q_2)
    (q_2) edge [bend left] node {$a, x\leq p, x:=p$} (q_1);
\end{tikzpicture}
\caption{Register automaton.}
\label{fig: register automaton}
\end{center}
\end{figure}
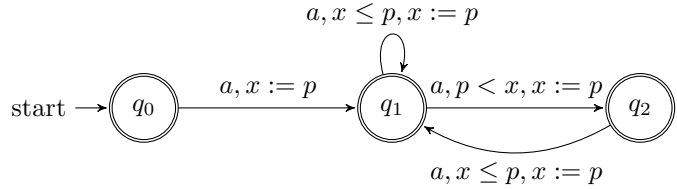
The initial location $q_0$ is marked by an arrow ``start''. 
There is just a single register $x$. Set $\Gamma$ contains four transitions, which are indicated in the diagram.
All transitions are labeled with input symbol $a$, a guard over formal parameter $p$ and the registers, and an assignment. Guards represent conditions on data values. For example, the guard on the transition from $q_1$ to $q_2$, expresses that the data value of action $a$ must be smaller than the data value currently stored in register $x$. We write $x:= p$ to denote the assignment that stores the data parameter $p$ in register $x$, that is, the function $\varrho$ satisfying $\varrho(x) = p$.  
Note that in location $q_1$, which has more than one outgoing transition, the conjunction is not satisfiable.
In the graphical representation of register automata, trivial guards ($\top$) and assignments (empty domain) are omitted.
\end{example}

\begin{example}
\label{controller example}
The register automaton of Figure~\ref{fig: controller} describes a simple proportional controller.
In such a controller, the output is proportional to the error signal, which is the difference between the set point and the value reported by the sensor.
In the first transition from initial location $q_0$, the controller receives the set point and stores it in register $\setpoint$.
In the next transition from location $q_1$, the controller receives the value for the proportional gain and stores it in register $K$.
Now whenever the controller receives a sensor value, it stores this value in register $\sensorvalue$, and then outputs $K * (\setpoint - \sensorvalue)$,
the product of proportional gain and the error signal.
However, due to physical limitations of the actuator, the output is bounded between $-30$ and $30$.
Whenever $K * (\setpoint - \sensorvalue)$ lies outside this interval, the controller sets the output to either $-30$ or $30$,
and returns to its initial state via a reset transition.
	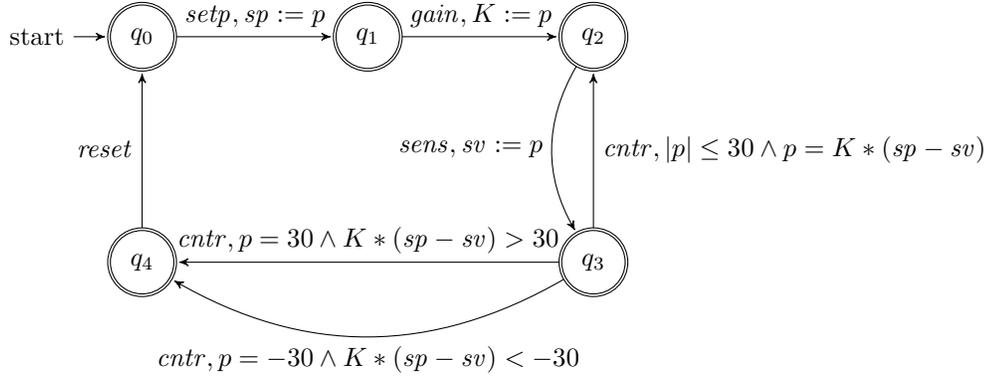
\begin{figure}[h!]
		\begin{center}
			\begin{tikzpicture}[shorten >=1pt,node distance=3cm,on grid,auto] 
			\node[state,initial,accepting] (q_0)   {$q_0$}; 
			\node[state,accepting] (q_1) [ right=of q_0] {$q_1$}; 
			\node[state,accepting](q_2) [right=of q_1] {$q_2$};
			\node[state,accepting](q_3) [below=of q_2] {$q_3$};
			\node[state,accepting](q_4) [below=of q_0] {$q_4$};
			\path[->]
			(q_0) edge node {$\setp,  \setpoint :=p$} (q_1)
			(q_1) edge node {$\gain, K := p$} (q_2)
			(q_2) edge [bend right] node[left] {$\sens, \sensorvalue :=p$} (q_3)
			(q_3) edge  node[right] {$\cntr, |p| \leq  30 \wedge  p = K * (\setpoint - \sensorvalue)$} (q_2)
			(q_3) edge node[above]  {$\cntr,p=30 \wedge  K * (\setpoint - \sensorvalue) > 30$} (q_4)
			(q_3) edge [bend left] node {$\cntr,p=-30 \wedge  K * (\setpoint - \sensorvalue) < -30$} (q_4)
			(q_4) edge node {$\reset$} (q_0);
			\end{tikzpicture}
			\caption{A register automaton model of a proportional controller.}
			\label{fig: controller}
		\end{center}
	\end{figure}

The register automaton of Figure~\ref{fig: register automaton} can be easily expressed within the input language of the black box learning tool RALib \cite{CasselRALib},
and in fact this tool is able to learn this automaton. The relations/predicates used in the register automaton of Figure~\ref{fig: controller} are more complicated, and black box learning of this automaton is beyond the capabilities of state-of-the-art active learning algorithms and tools.
\end{example}

The semantics of a register automaton is defined in terms of the set of \emph{data words} that it accepts.

\begin{definition}
\label{def data language}
Let $\Sigma$ be a finite alphabet.
A \emph{data symbol} over $\Sigma$ is a pair $\alpha(d)$ with $\alpha \in \Sigma$ and $d \in \D$.
A \emph{data word} over $\Sigma$ is a finite sequence of data symbols, i.e., a word over $\Sigma \times \D$.
A \emph{data language} over $\Sigma$ is a set of data words over $\Sigma$.
\end{definition}

We associate a data language to each register automata as follows.

\begin{definition}
\label{def semantics register automata}
Let $\A = (\Sigma, Q, q_0, V, \Gamma)$ be a register automaton.
A \emph{configuration} of $\A$ is a pair $(q, \xi)$, where $q \in Q$ and $\xi: V \rightharpoonup \D$.
A \emph{run of $\A$ over} a data word $w = \alpha_1 (d_1)\cdots\alpha_n (d_n)$ is a sequence
\begin{eqnarray*}
\gamma & = & (q_0, \xi_0) ~ \xrightarrow{\alpha_1(d_1)} ~ 
(q_1, \xi_1) \quad
\ldots \quad
(q_{n-1}, \xi_{n-1}) ~ \xrightarrow{\alpha_n(d_n)} ~ (q_n, \xi_n ),
\end{eqnarray*}
where, for $0 \leq i \leq n$, $(q_i, \xi_i)$ is a configuration of $\A$, $\domain(\xi_0) = \emptyset$, and for $0 < i \leq n$, $\Gamma$ contains a transition $q_{i-1} \xrightarrow{\alpha_i, g_i, \varrho_i} q_i$ such that
\begin{itemize}
    \item
    $\iota_i \models g_i$, where $\iota_i = \xi_{i-1} \cup \{ (p, d_i) \}$, and
    \item
    $\xi_i = \iota_i \circ \varrho_i$.
\end{itemize} 
We call $w$ the \emph{trace} of $\gamma$, notation $\trace(\gamma) = w$.
Data word $w$ is \emph{accepted} by $\A$ if $\A$ has a run over $w$.
The \emph{data language} of $\A$, notation $L(\A)$, is the set of all data words that are accepted by $\A$.
Two register automata with the same sets of input symbols are trace equivalent if they accept the same data language.
\end{definition}

\begin{example}
\label{example run}
Consider the register automaton of Figure~\ref{fig: register automaton}.
This automaton accepts the data word  $a(1)~ a(4)~ a(0) ~ a(7)$ since
the following sequence of steps is a run (here $\xi_0$ is the trivial function with empty domain):
\[
(q_0, \xi_0) \xrightarrow{a(1)} (q_1, x \mapsto 1) \xrightarrow{a(4)} (q_1, x \mapsto 4) \xrightarrow{a(0)} (q_2, x \mapsto 0) \xrightarrow{a(7)} (q_1, x \mapsto 7).
\]
Upon receiving the first input $a(1)$, the automaton jumps to $q_1$ and stores data value $1$ in the register $x$. Since $4$ is bigger than
$1$, the automaton takes the self loop upon receiving the second input $a(4)$ and stores $4$ in $x$. Since $0$ is less than $4$, it moves to $q_2$ upon receipt of the third input $a(0)$ and updates $x$ to $0$. Finally, the automaton gets back to $q_1$ as $7$ is bigger than $0$.
\end{example}

Suppose that in the register automaton of Figure~\ref{fig: register automaton} we replace the guard on the transition from $q_0$ to $q_1$ by $x \leq p$.  Since initial valuation $\xi_0$ does not assign a value to $x$, this means that it is not defined whether $\xi_0$ satisfies guard $x \leq p$. Automata in which such ``runtime errors'' do not occur are called \emph{well-formed}.

\begin{definition}
	Let $\A$ be a register automaton.
	We say that a configuration $(q, \xi)$ of $\A$ is \emph{reachable} if there is a run of $\A$ that ends with $(q, \xi)$.   We call $\A$ \emph{well-formed} if, for each reachable configuration $(q, \xi)$, $\xi$ assigns a value to all variables from $V$ that occur in guards of outgoing transitions of $q$, that is, 
	\begin{eqnarray*}
		(q, \xi) \mbox{ reachable } \wedge q \xrightarrow{\alpha, g \varrho} q' & \Rightarrow & \Var(g) \subseteq \domain(\xi) \cup \{ p \}. \label{variables guards always defined}
	\end{eqnarray*}
\end{definition}
As soon as the set of data values and the  collection of predicates becomes nontrivial, well-formedness of register
automata becomes undecidable.
However, it is easy to come up with a sufficient condition for well-formedness, based on a syntactic analysis of $\A$,
which covers the cases that occur in practice.
In the remainder of article, we will restrict our attention to well-formed register automata. In particular, the register automata that are constructed from regular symbolic trace languages in our Myhill-Nerode theorem will be well-formed.

\paragraph{Relation with automata of Cassel at al.}
	Our definition of a register automaton is different from the one used in the $\mathit{SL}^{\ast}$ algorithm of Cassel et al \cite{CasselHJS16} and its implementation in RALib \cite{CasselRALib}. It is instructive to compare the two definitions.
\begin{enumerate}
	\item
		In order to establish a Myhill-Nerode theorem, \cite{CasselHJS16} requires that structure $\R$, which is a parameter of the $\mathit{SL}^{\ast}$ algorithm, is \emph{weakly extendible}. This technical restriction excludes many data types that are commonly used in practice. For instance, the set of integers with constants $0$ and $1$, an addition operator $+$, and a less-than predicate $<$ is not weakly extendable. 
		For readers familiar with \cite{CasselHJS16}: a structure (called theory in \cite{CasselHJS16}) is weakly extendable if for all natural numbers $k$ and data words $u$, there exists a $u'$ with $u' \approx_{\R} u$ which is $k$-extendable. Intuitively, $u' \approx_{\R} u$ if data words $u'$ and $u$ have the same sequences of actions and cannot be distinguished by the relations in $\R$. Let 
		\begin{eqnarray*}
		u & = &  \alpha(0) \;  \alpha(1) \;  \alpha(2) \; \alpha(4) \; \alpha(8) \; \alpha(16) \; \alpha(11). 
		\end{eqnarray*}
		Then there exists just one $u'$ different from $u$ with $u' \approx_{\R} u$, namely 
		\begin{eqnarray*}
		u' & = &  \alpha(0)\;  \alpha(1) \;  \alpha(2) \; \alpha(4) \; \alpha(8) \; \alpha(16) \; \alpha(13). 
		\end{eqnarray*}
	(Since $0$ and $1$ are constants, the first two data parameters must be equal. Since $+(1,1,2)$ also the third data parameters must be equal, etc. Since $8 < 11 < 16$, the final data parameter must be in between $8$ and $16$, but we cannot pick $9$, $10$, $12$, $14$ and $15$ because
	$+(1, 8, 9)$, $+(2,8,10)$, $+(4,8,12)$, $+(2, 14, 16)$ and $+(1,15,16)$, respectively.)
		Now both $u$ and $u'$ are not even $1$-extendable: if we extend $u$ with $\alpha(3)$, we cannot find a matching extension $\alpha(d')$ of $u'$ such that $u \; \alpha(3) \approx_{\R} u' \; \alpha(d')$, and if we extend $u'$ with $\alpha(5)$ we cannot find a matching extension $\alpha(d)$ of $u$ such that $u \; \alpha(d) \approx_{\R} u' \; \alpha(5)$.
	In the terminology of model theory \cite{Poizat}, a structure is $k$-extendable if the Duplicator can win certain $k$-move Ehrenfeucht-Fra\"{\i}ss\'e games. For structures $\R$ that are homogeneous, one can always win these games, for all $k$. Thus, homogeneous structures are weakly extendible.
	An even stronger requirement, which is imposed in work of \cite{MoermanEtAl17} on nominal automata, is that $\R$ is $\omega$-categorical. 
	 In our approach, no restrictions on $\R$ are needed.
		\item 
		 Unlike \cite{CasselHJS16}, we do not associate a fixed set of variables to each location.  Our definition is slightly more general, which simplifies some technicalities.
		\item
		However, we require assignments to be injective, a restriction that is not imposed by \cite{CasselHJS16}. But note that the register automata that are actually constructed by $\mathit{SL}^{\ast}$ are \emph{right-invariant} \cite{CasselHJMS15}. In a right-invariant register automaton, two values can only be tested for equality if one of them is the current input symbol.  Right-invariance, as defined in \cite{CasselHJMS15}, implies that assignments are injective. As illustrated by the example of Figure~\ref{fig:succinct}, our register automata are exponentially more succinct than the right-invariant register automata constructed by $\mathit{SL}^{\ast}$. As pointed out in \cite{CasselHJMS15}, right-invariant register automata in turn are more succinct than the automata of \cite{FrancezK03,BenediktLP10}.
\begin{figure}[htb!]
	\begin{center}
		\begin{tikzpicture}[shorten >=1pt,node distance=1.7cm,on grid,auto] 
		\node[state,initial,accepting] (q_0)   {$q_0$}; 
		\node[state,accepting] (q_1) [ right=of q_0] {$q_1$}; 
		\node[state,accepting](q_2) [right=of q_1] {$q_2$};
		\node[state,accepting] (q_3) [right=of q_2] {{\tiny $q_{2n-1}$}};
		\node[state,accepting] (q_4) [right=of q_3] {$q_{2n}$};
		\node[state,accepting] (q_5) [below=of q_4] {$q_{\mathit{ok}}$};
		\path[->]
		(q_0) edge[bend left] node {$a, x_1 :=p$} (q_1)
		(q_1) edge[bend left] node {$a, x_2 :=p$} (q_2)
		(q_2) edge[color=white] node[color=black] {$\cdots$} (q_3)
		(q_3) edge[bend left] node {$a, x_{2n} :=p$} (q_4)
		(q_4) edge node[left] {$b, \bigwedge_{i=1}^{n-1} ( x_i = x_{i+1} \leftrightarrow x_{n+i} = x_{n+i+1})$} (q_5)
		;
		\end{tikzpicture}
		\caption{For each $n>0$, $\A_n$ is a register automaton that first accepts $2n$ input symbols $a$, storing all the data values that it receives, and then accepts input symbol $b$ when two consecutive values in the first half of the input are equal iff the corresponding consecutive values in the second half of the input are equal.  The number of locations and transitions of $\A_n$ grows linearly with $n$.  There exist right-invariant register automata $\B_n$ that accept the same data languages, but their size grows exponentially with $n$.}
		\label{fig:succinct}
	\end{center}
\end{figure}
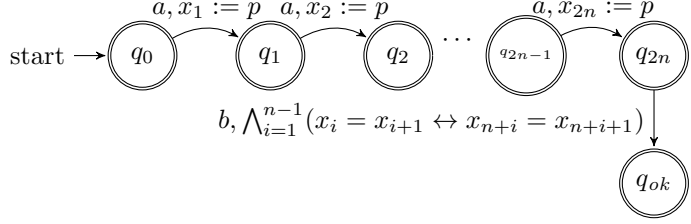
		\item 
		Since any prefix of a run is also a run, the data language accepted by a register automaton is prefix closed, a restriction that is not imposed
		in \cite{CasselHJS16}. Prefix closedness is convenient for technical reasons, but for reactive systems the restriction is actually quite natural. RALib \cite{CasselRALib} also assumes that data languages are prefix closed.
	\end{enumerate}

Since register automata are deterministic, there exists a one-to-one correspondence between accepted data words and runs.
From every run $\gamma$ of a register automaton $\A$ we can trivially extract a data word $\trace(\gamma)$ by forgetting all information except the data symbols.
Conversely, for each data word $w$ that is accepted by $\A$, there exists a corresponding run $\gamma$, which is uniquely determined by the data word since from each configuration $(q, \xi)$ and data symbol $\alpha(d)$, exactly one transition will be enabled.

\begin{lemma}
	\label{lem for each trace there is a unique run}
	Suppose $\gamma$ and $\gamma'$ are runs of a register automaton $\A$ such that $\trace(\gamma) = \trace(\gamma')$.
	Then $\gamma = \gamma'$.
\end{lemma}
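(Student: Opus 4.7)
The plan is to prove the lemma by induction on the length $n$ of the common trace $w = \trace(\gamma) = \trace(\gamma')$, exploiting the fact that every run must start from the unique initial configuration $(q_0, \xi_0)$ with $\domain(\xi_0) = \emptyset$, together with the determinism clause in the definition of register automata.

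For the base case $n = 0$, both $\gamma$ and $\gamma'$ consist of only the initial configuration $(q_0, \xi_0)$, which is fully determined by $\A$, so $\gamma = \gamma'$. For the inductive step, assume the claim holds for all runs with trace of length $n-1$, and let $\gamma$ and $\gamma'$ be runs with common trace $w = \alpha_1(d_1) \cdots \alpha_n(d_n)$. By stripping the last step from each run, we obtain runs $\gamma_{n-1}$ and $\gamma'_{n-1}$ over the common prefix $\alpha_1(d_1) \cdots \alpha_{n-1}(d_{n-1})$; by the induction hypothesis these are equal, so in particular they end in the same configuration $(q_{n-1}, \xi_{n-1})$.

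Now $\gamma$ extends this by a step using some transition $q_{n-1} \xrightarrow{\alpha_n, g, \varrho} q$ with $\iota_n = \xi_{n-1} \cup \{(p, d_n)\}$ satisfying $g$, and $\gamma'$ likewise uses some transition $q_{n-1} \xrightarrow{\alpha_n, g', \varrho'} q'$ with $\iota_n \models g'$. Since $\A$ is deterministic, the conjunction of guards of any two distinct $\alpha_n$-transitions out of $q_{n-1}$ is unsatisfiable; as $\iota_n$ witnesses satisfiability of $g \wedge g'$, the two transitions must in fact be identical. Consequently the resulting configurations coincide, because $\xi_n = \iota_n \circ \varrho = \iota_n \circ \varrho'$ is forced by the definition, so $\gamma = \gamma'$.

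There is no serious obstacle here: the argument is essentially a mechanical unfolding of the determinism assumption together with the fact that the initial configuration and the assignment update in Definition~\ref{def semantics register automata} are uniquely determined by the data word. The only point that requires a moment's care is verifying that determinism, stated as unsatisfiability of conjunctions of distinct guards at the same location, indeed rules out the possibility that $\gamma$ and $\gamma'$ choose different transitions at the final step; but this is immediate since $\iota_n$ itself satisfies both $g$ and $g'$.
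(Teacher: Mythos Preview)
Your proof is correct and uses essentially the same idea as the paper: both arguments exploit that from a common configuration the same data symbol forces the same transition, by determinism. The only cosmetic difference is packaging---you use induction on trace length, while the paper argues by contradiction from the first point of divergence---but the mathematical content is identical.
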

\iflong
\begin{proof}
	We prove the lemma by contradiction. Suppose $\gamma \neq \gamma'$.
	All runs of $\A$ share at least the initial configuration $(q_0, \xi_0)$.
	Let $\gamma$ be as in Definition~\ref{def semantics register automata}, and let $(q_{i-1}, \xi_{i-1})$ be the last point where $\gamma$ and $\gamma'$ coincide.
	From this point, $\gamma$ continues its course with a step
	\[
	(q_{i-1}, \xi_{i-1}) ~ \xrightarrow{\alpha_i(d_i)} ~ (q_i, \xi_i),
	\]
	whereas $\gamma'$ continues with a different step
	\[
	(q_{i-1}, \xi_{i-1}) ~ \xrightarrow{\alpha_i(d_i)} ~ (q'_i, \xi'_i).
	\]
	Note that both steps carry the same data symbol as $\trace(\gamma) = \trace(\gamma')$. Then
	$\Gamma$ contains a transition $q_{i-1} \xrightarrow{\alpha_i, g_i, \varrho_i} q_i$ such that
	$\iota_i \models g_i$, where $\iota_i = \xi_{i-1} \cup \{ (p, d_i) \}$.
	In addition, $\Gamma$ contains a transition $q_{i-1} \xrightarrow{\alpha_i, g'_i, \varrho'_i} q'_i$ such that
	$\iota_i \models g'_i$.
	Since $\iota_i \models g_i$ and $\iota_i \models g'_i$, we may conclude that $g_i \wedge g'_i$ is satisfiable. Therefore, as $\A$ is deterministic, both transitions
	are the same, that is, $g_i \equiv g'_i$, $\varrho_i = \varrho'_i$ and $q_i = q'_i$.
	But then also $\xi_i = \iota_i \circ \varrho_i =  \iota_i \circ \varrho'_i = \xi'_i$, which means that the two outgoing steps of configuration $(q_{i-1}, \xi_{i-1})$ are the same. Contradiction. \qed
\end{proof}
\fi

\subsection{Symbolic semantics}
We will now introduce an alternative trace semantics for register automata, which records both the sequence of input symbols that occur during a run as well as the constraints on input parameters that are imposed by this run.  We will explore some basic properties of this semantics, and show that the equivalence induced by symbolic traces is finer than data equivalence.

A symbolic language consists of words in which input symbols and guards alternate.

\begin{definition}
	\label{def symbolic languages}
	Let $\Sigma$ be a finite alphabet.
	A \emph{symbolic word} over $\Sigma$ is a finite alternating sequence $w = \alpha_1 G_1 \cdots \alpha_n G_n$ of input symbols from $\Sigma$ and guards.
	A \emph{symbolic language} over $\Sigma$ is a set of symbolic words over $\Sigma$.
\end{definition}

A symbolic run is just a run, except that the valuations do not return concrete data values, but markers (variables) that record the exact place in the run where the input occurred. We use variable $v_i$ as a marker for the $i$-th input value. Using these symbolic valuations (variable renamings, actually) it is straightforward to compute the constraints on the input parameters from the guards occurring in the run.

\begin{definition}
\label{def symbolic semantics}
Let $\A = (\Sigma, Q, q_0, V, \Gamma)$ be a register automaton.
A \emph{symbolic run} of $\A$ is a sequence 
\[
\delta ~=~ (q_0, \zeta_0) ~ \xrightarrow{\alpha_1, g_1, \varrho_1} ~ 
(q_1, \zeta_1) ~
\ldots ~ \xrightarrow{\alpha_n, g_n, \varrho_n} ~
(q_n, \zeta_n),
\]
where $\zeta_0$ is the trivial variable renaming with empty domain and, for $0 < i \leq n$,
\begin{itemize}
    \item 
    $q_{i-1} \xrightarrow{\alpha_i, g_i, \varrho_i} q_i$ is a transition in $\Gamma$,
    \item
    $\zeta_i$ is a variable renaming with $\domain(\zeta_i) \subseteq V$, and
    \item
    $\zeta_i = \iota_i \circ \varrho_i$, where $\iota_i = \zeta_{i-1} \cup \{ (p, v_i) \}$.
\end{itemize}
We also require that $G_1 \wedge \cdots \wedge G_n$ is satisfiable, where $G_i \equiv g_i[\iota_i]$, for $0 < i \leq n$.

The \emph{symbolic trace} of $\delta$ is the symbolic word $\strace(\delta) = \alpha_1 ~ G_1 \cdots \alpha_n ~ G_n$.
Symbolic word $w$ is \emph{accepted} by $\A$ if $\A$ has a symbolic run $\delta$ with $\strace(\delta) = w$.
The symbolic language of $\A$, notation $L_s(\A)$, is the set of all symbolic words accepted by $\A$.
Two register automata with the same sets of input symbols are symbolic trace equivalent if they accept the same symbolic language.
\end{definition}

\begin{example}
Consider the register automaton of Figure~\ref{fig: register automaton}.
The following sequence constitutes a symbolic run:
\[
(q_0, \zeta_0) \xrightarrow{a, \top, x:=p} (q_1, x \mapsto v_1) \xrightarrow{a, x \leq p, x:=p} (q_1, x \mapsto v_2)  
\]
\[ 
~~~~~~~~~~~~ \xrightarrow{a, p < x, x:=p} (q_2, x \mapsto v_3) \xrightarrow{a, x \leq p, x:=p} (q_1, x \mapsto v_4).
\]
Since
\begin{eqnarray*}
	\iota_1 & = &  \{ (p, v_1) \}\\
	\iota_2 & = & \{ (x, v_1), (p, v_2) \}\\
	\iota_3 & = & \{ (x, v_2), (p, v_3)\}\\
	\iota_4 & = & \{ (x, v_3), (p, v_4) \},
\end{eqnarray*}
the automaton accepts the symbolic word  $w = a ~ \top ~ a ~ v_1 \leq v_2 ~ a ~ v_3 < v_2 ~ a ~ v_3 \leq v_4$.
Note that the guard of $w$ is satisfiable, for instance by the valuation $\xi$ with
$\xi(v_1) = 1$, $\xi(v_2) = 4$, $\xi(v_3) = 0$ and $\xi(v_4) = 7$, which
corresponds to the (concrete) run of Example~\ref{example run}.
\end{example}

\begin{example}
	Consider the proportional controller of Figure~\ref{fig: controller}. A data word accepted by this register automaton is:
	\[
	\setp(10) ~ \gain(0,5) ~ \sens(20) ~ \cntr(-5) ~ \sens(80) ~ \cntr(-30) ~ \reset(0).
	\]
	The corresponding symbolic word is:
	\begin{eqnarray*}
	\setp ~ \top ~ \gain ~ \top & \sens ~ \top &  \cntr ~ |v_4| \leq  30 \wedge  v_4 = v_2 * (v_1 - v_3) \\
	&  \sens ~ \top & \cntr ~ v_6 =-30 \wedge  v_2  * (v_1 - v_5) < -30 ~ \reset ~ \top.
	\end{eqnarray*}
\end{example}

The two technical lemmas below state some basic properties about variable renamings in a symbolic run. The proofs are straightforward, by induction.

\begin{lemma}
\label{lemma symbolic run}
Let $\delta$ be a symbolic run of $\A$, as in Definition~\ref{def symbolic semantics}.
Then $\range(\zeta_i) \subseteq \{ v_1,\ldots, v_i \}$, for $i \in \{ 0,\ldots,n \}$,
and $\range(\iota_i) \subseteq \{ v_1,\ldots, v_i \}$, for $i \in \{ 1,\ldots,n \}$.
\end{lemma}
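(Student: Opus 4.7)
The plan is to prove both statements simultaneously by induction on $i$, exploiting the mutual dependence between $\zeta_i$ and $\iota_i$ given by the defining equations $\iota_i = \zeta_{i-1} \cup \{(p, v_i)\}$ and $\zeta_i = \iota_i \circ \varrho_i$. Since the two clauses share the same inductive engine, folding them into one induction keeps the argument compact.

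For the base case $i = 0$, the statement about $\zeta_0$ is immediate: $\zeta_0$ is specified in Definition~\ref{def symbolic semantics} as the trivial variable renaming with empty domain, so $\range(\zeta_0) = \emptyset \subseteq \{v_1, \ldots, v_0\} = \emptyset$. The $\iota$-clause is vacuous here since it only ranges over $i \geq 1$.

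For the inductive step, assume $\range(\zeta_{i-1}) \subseteq \{v_1, \ldots, v_{i-1}\}$. First I would bound $\range(\iota_i)$: because $\iota_i = \zeta_{i-1} \cup \{(p, v_i)\}$, every element in its range is either in $\range(\zeta_{i-1})$ or equals $v_i$, hence lies in $\{v_1, \ldots, v_{i-1}\} \cup \{v_i\} = \{v_1, \ldots, v_i\}$. Then for $\zeta_i = \iota_i \circ \varrho_i$, any $v \in \range(\zeta_i)$ has the form $\iota_i(\varrho_i(x))$ for some $x \in \domain(\zeta_i)$, and so $v \in \range(\iota_i) \subseteq \{v_1, \ldots, v_i\}$. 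This closes the induction.

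I do not anticipate any real obstacle: the lemma is a bookkeeping statement, and the only subtlety is remembering to treat $i = 0$ (for $\zeta$) and $i = 1$ (for $\iota$) as the base, and to apply the inductive hypothesis to $\zeta_{i-1}$ before bounding $\iota_i$ and then $\zeta_i$ in that order. No appeal to satisfiability of the accumulated guard is needed, which is fortunate since the lemma is about range inclusions in the variable renamings only.
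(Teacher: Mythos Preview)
Your proposal is correct and follows essentially the same approach as the paper: an induction on $i$ with the trivial base case $\range(\zeta_0)=\emptyset$, and an inductive step that first bounds $\range(\iota_i)$ using the hypothesis on $\zeta_{i-1}$ and then observes $\range(\zeta_i)=\range(\iota_i\circ\varrho_i)\subseteq\range(\iota_i)$. The paper's proof is slightly terser but the structure and reasoning are identical.
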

\iflong
\begin{proof}
By induction on $i$:
\begin{itemize}
    \item
    Base. Suppose $i=0$. Then the lemma holds trivially since $\range( \zeta_0) = \emptyset$.
    \item
    Induction step. Suppose $i>0$. Then
    \begin{eqnarray*}
    \range(\zeta_i) & = &  \range(\iota_i \circ \varrho_i) \\
    &\subseteq& \range(\iota_i) \\
    & = & \range(\zeta_{i-1} \cup \{ (p, v_i) \}) \\
    &=& \range(\zeta_{i-1}) \cup \{ v_i \} \mbox{ (by induction hypothesis) }\\
    &\subseteq & \{ v_1,\ldots, v_{i-1} \} \cup \{ v_i \}
    = \{ v_1,\ldots, v_i \}.
    \end{eqnarray*}
\end{itemize}
\qed
\end{proof}
\fi

As a consequence of our assumption that assignments in a register automaton are injective, all the variable renamings in a symbolic run are injective as well.

\begin{lemma}
\label{lemma injective valuations}
Let $\delta$ be a symbolic run of $\A$, as in Definition~\ref{def symbolic semantics}.
Then, for each $i \in \{ 0,\ldots, n \}$, $\zeta_i$ is injective, and for each $i \in \{ 1,\ldots, n\}$, $\iota_i$ is injective.
\end{lemma}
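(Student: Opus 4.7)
The plan is to prove both statements simultaneously by induction on $i$, interleaving the two claims so that at each stage I first establish injectivity of $\iota_i$ (using injectivity of $\zeta_{i-1}$ plus freshness of $v_i$), and then derive injectivity of $\zeta_i$ (using injectivity of $\iota_i$ plus injectivity of $\varrho_i$).

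For the base case $i=0$, $\zeta_0$ is the empty variable renaming, which is vacuously injective, so there is nothing to prove.

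For the induction step, assume $\zeta_{i-1}$ is injective. To show $\iota_i = \zeta_{i-1} \cup \{(p, v_i)\}$ is injective, I would use Lemma~\ref{lemma symbolic run}, which gives $\range(\zeta_{i-1}) \subseteq \{v_1,\ldots,v_{i-1}\}$. Since $v_i \notin \{v_1,\ldots,v_{i-1}\}$, the fresh marker $v_i$ does not coincide with any value already in $\range(\zeta_{i-1})$, and $p \notin V \supseteq \domain(\zeta_{i-1})$, so adjoining the pair $(p, v_i)$ to $\zeta_{i-1}$ preserves injectivity. Hence $\iota_i$ is injective. For $\zeta_i = \iota_i \circ \varrho_i$, suppose $\zeta_i(x) = \zeta_i(y)$ for $x, y \in \domain(\zeta_i) \subseteq \domain(\varrho_i)$. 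Then $\iota_i(\varrho_i(x)) = \iota_i(\varrho_i(y))$; by injectivity of $\iota_i$ this yields $\varrho_i(x) = \varrho_i(y)$, and by injectivity of $\varrho_i$ (which is required in the definition of a register automaton) this gives $x = y$. Thus $\zeta_i$ is injective, completing the induction.

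There is no real obstacle here: the argument reduces to a routine verification that composition and disjoint extension preserve injectivity. The only point that needs attention is justifying why the newly introduced pair $(p, v_i)$ does not destroy injectivity, which is precisely where Lemma~\ref{lemma symbolic run} is used to guarantee that $v_i$ is fresh with respect to $\range(\zeta_{i-1})$. Everything else is a direct appeal to the injectivity of $\varrho_i$ built into the definition of a register automaton.
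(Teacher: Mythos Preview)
Your proof is correct and follows essentially the same approach as the paper's own proof: induction on $i$, using Lemma~\ref{lemma symbolic run} to guarantee freshness of $v_i$ so that adjoining $(p,v_i)$ to the injective $\zeta_{i-1}$ yields an injective $\iota_i$, and then using injectivity of $\varrho_i$ (from the definition of register automaton) to conclude that the composition $\zeta_i = \iota_i \circ \varrho_i$ is injective. The only difference is cosmetic: you spell out the composition argument explicitly, whereas the paper simply invokes the fact that compositions of injective functions are injective.
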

\iflong
\begin{proof}
By induction on $i$:
\begin{itemize}
    \item 
    Base. Suppose $i=0$. Then the lemma holds trivially since $\range( \zeta_0) = \emptyset$.
    \item
    Induction step. Suppose $i>0$. By Lemma~\ref{lemma symbolic run}, $\range(\zeta_{i-1}) \subseteq \{ v_1,\ldots, v_{i-1} \}$.
    By the induction hypothesis, $\zeta_{i-1}$ is injective. From this we conclude that $\zeta_{i-1} \cup \{ (p, v_i) \}$ is injective,
    which means $\iota_i$ is injective.
    Since the composition of injective functions is injective, $\zeta_i  =  \iota_i \circ \varrho_i$ is injective. \qed
\end{itemize}
\end{proof}
\fi

All symbolic words accepted by a register automaton satisfy some basic sanity properties: guards may only refer to the markers for values received thus far, and the conjunction of all the guards is satisfiable. We call symbolic words that satisfy these properties \emph{feasible}. Note that if a symbolic word is feasible, any prefix is feasible as well.

\begin{definition}[Feasible]
	Let $w = \alpha_1 G_1 \cdots \alpha_n G_n$ be a symbolic word. We write $\length(w) = n$ and  $\guard(w) = G_1 \wedge \cdots \wedge G_n$.
	Word $w$ is \emph{feasible} if $\guard(w)$ is satisfiable and $\Var(G_i) \subseteq \{ v_1,\ldots, v_i \}$, for each $i \in \{ 1,\ldots, n\}$.
	A symbolic language is \emph{feasible} if it is nonempty, prefix closed and consists of feasible symbolic words.
\end{definition}

\begin{lemma}
	\label{lemma tainted language register automaton is feasible}
	$L_s(\A)$ is feasible.
\end{lemma}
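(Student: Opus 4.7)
The plan is to verify the three conjuncts of feasibility separately: nonemptiness, prefix closure, and per-word feasibility (satisfiability plus the variable-scope condition).

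For nonemptiness, I would exhibit the trivial symbolic run of length zero, namely $\delta_0 = (q_0, \zeta_0)$, where $\zeta_0$ is the empty variable renaming. This is vacuously a symbolic run in the sense of Definition~\ref{def symbolic semantics}: there are no transitions to check, and the empty conjunction $\guard(\epsilon)$ is $\top$, which is satisfiable. Hence $\epsilon \in L_s(\A)$.

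For prefix closure, I would show that any prefix of a symbolic run is again a symbolic run. Let $\delta$ be a symbolic run with $\strace(\delta) = \alpha_1 G_1 \cdots \alpha_n G_n$, and fix $k \le n$. The truncated sequence $\delta' = (q_0, \zeta_0) \xrightarrow{\alpha_1, g_1, \varrho_1} \cdots \xrightarrow{\alpha_k, g_k, \varrho_k} (q_k, \zeta_k)$ still satisfies every local clause of Definition~\ref{def symbolic semantics} (the transitions are in $\Gamma$ and the $\zeta_i$ are defined exactly as before). The only non-local requirement is satisfiability of $G_1 \wedge \cdots \wedge G_k$, which follows immediately: any valuation $\xi$ witnessing satisfiability of $G_1 \wedge \cdots \wedge G_n$ also satisfies each $G_i$ for $i \le k$. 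So $\strace(\delta') = \alpha_1 G_1 \cdots \alpha_k G_k \in L_s(\A)$, and $L_s(\A)$ is prefix closed.

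For per-word feasibility, let $w = \alpha_1 G_1 \cdots \alpha_n G_n \in L_s(\A)$ with witnessing symbolic run $\delta$. Satisfiability of $\guard(w) = G_1 \wedge \cdots \wedge G_n$ is directly part of Definition~\ref{def symbolic semantics}. For the variable-scope condition, I would argue as follows. By definition $G_i \equiv g_i[\iota_i]$ where $\iota_i = \zeta_{i-1} \cup \{(p, v_i)\}$. Well-formedness of $\A$ guarantees $\Var(g_i) \subseteq \domain(\zeta_{i-1}) \cup \{p\} = \domain(\iota_i)$, so $g_i[\iota_i]$ is defined and every variable of $G_i$ lies in $\range(\iota_i)$. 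Lemma~\ref{lemma symbolic run} gives $\range(\iota_i) \subseteq \{v_1, \ldots, v_i\}$, hence $\Var(G_i) \subseteq \{v_1, \ldots, v_i\}$, as required.

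The only subtlety I anticipate is justifying that $g_i[\iota_i]$ is syntactically well-defined; this is where well-formedness is used, and it should be recorded explicitly rather than glossed over. Apart from that, the proof is a direct unfolding of definitions together with Lemma~\ref{lemma symbolic run}.
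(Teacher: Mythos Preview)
Your proof is correct and follows essentially the same route as the paper's: nonemptiness via the empty run, prefix closure via truncation of symbolic runs, and the variable-scope condition via Lemma~\ref{lemma symbolic run}. One minor remark: your invocation of well-formedness to ensure $g_i[\iota_i]$ is defined is not strictly needed here, since Definition~\ref{def symbolic semantics} already presupposes that each $G_i \equiv g_i[\iota_i]$ exists as part of what it means to be a symbolic run; the paper accordingly omits this step.
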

\iflong
\begin{proof}
	Since the initial configuration $(q_0, \zeta_0)$ is a symbolic run, the empty word $\epsilon$ is a symbolic word of $\A$, and so $L_s(\A)$ is nonempty.
	Since a prefix of a symbolic run is a symbolic run,	$L_s(\A)$ is prefix closed.
	Suppose $w = \alpha_1 G_1 \cdots \alpha_n G_n$ is a symbolic word of $\A$.  It suffices to show that $w$ is feasible.
	Consider a symbolic run $\delta$ for $w$, as in Definition~\ref{def symbolic semantics}.
	By Lemma~\ref{lemma symbolic run}, 
	$\Var(G_i)  =  \Var(g_i[\iota_i]) \subseteq \range(\iota_i) \subseteq \{ v_1 ,\ldots, v_i \}$, for $i \in \{ 1,\ldots, n\}$.
	By definition of $\delta$, $\guard(w) = G_1 \wedge \cdots \wedge G_n$ is satisfiable. \qed
\end{proof}
\fi

Since register automata are deterministic, each symbolic trace of $\A$ corresponds to a unique symbolic run of $\A$.

\begin{lemma}
\label{lem for each tainted trace there is a unique symbolic run}
Suppose $\delta$ and $\delta'$ are symbolic runs of a register automaton $\A$ such that $\strace(\delta) = \strace(\delta')$.
Then $\delta = \delta'$.
\end{lemma}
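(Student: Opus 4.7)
The plan is to mirror the contradiction argument used in the proof of Lemma~\ref{lem for each trace there is a unique run}, but replace its appeal to semantic satisfiability of the concrete guards by an appeal to Lemma~\ref{lemma variable renaming}, which transfers satisfiability of a substituted guard $g[\sigma]$ back to satisfiability of $g$. Concretely, I assume $\delta \neq \delta'$ and let $i$ be the first index at which the two symbolic runs diverge, so that $(q_{i-1}, \zeta_{i-1}) = (q'_{i-1}, \zeta'_{i-1})$. Since $\iota_i$ and $\iota'_i$ are both obtained by extending the common renaming $\zeta_{i-1}$ with $(p, v_i)$, we immediately get $\iota_i = \iota'_i$.

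Next I use $\strace(\delta) = \strace(\delta')$. This forces the $i$-th input symbol and the $i$-th guard of both runs to coincide: the transitions taken are $q_{i-1} \xrightarrow{\alpha_i, g_i, \varrho_i} q_i$ and $q_{i-1} \xrightarrow{\alpha_i, g'_i, \varrho'_i} q'_i$, with the same $\alpha_i$ and with $g_i[\iota_i] \equiv g'_i[\iota'_i] \equiv g'_i[\iota_i]$. The conjunction $G_1 \wedge \cdots \wedge G_n$ defining $\delta$ is satisfiable, so there is a valuation $\xi$ with $\xi \models g_i[\iota_i]$; by Lemma~\ref{lemma variable renaming} this gives $\xi \circ \iota_i \models g_i$, and, since $g_i[\iota_i] \equiv g'_i[\iota_i]$, the same $\xi$ witnesses $\xi \circ \iota_i \models g'_i$. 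Hence $g_i \wedge g'_i$ is satisfiable.

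Determinism of $\A$ then forces the two transitions to coincide: $g_i \equiv g'_i$, $\varrho_i = \varrho'_i$, and $q_i = q'_i$. Combining this with $\iota_i = \iota'_i$ yields $\zeta_i = \iota_i \circ \varrho_i = \iota'_i \circ \varrho'_i = \zeta'_i$, so the two runs actually agree at step $i$ as well, contradicting the choice of $i$.

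The only nonroutine point is step two: bridging from the syntactic equality $g_i[\iota_i] \equiv g'_i[\iota_i]$ of the recorded guards to the semantic statement that $g_i \wedge g'_i$ is satisfiable (the hypothesis determinism needs). Lemma~\ref{lemma variable renaming} supplies exactly this bridge, so no appeal to injectivity of $\iota_i$ (Lemma~\ref{lemma injective valuations}) or to any structural induction on guards is required, keeping the proof parallel to, and only marginally longer than, the concrete-trace version.
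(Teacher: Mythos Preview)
Your argument is correct and follows the same contradiction template as the paper, but the way you reach the contradiction is genuinely different. The paper first invokes injectivity of $\iota_i$ (Lemma~\ref{lemma injective valuations}) to cancel the substitution and obtain $g_i \equiv g'_i$ from $g_i[\iota_i] \equiv g'_i[\iota_i]$; it then assumes the two underlying transitions are distinct, uses determinism to conclude that $g_i \wedge g'_i$ (hence $g_i$ itself) is unsatisfiable, and finally appeals to Lemma~\ref{lemma variable renaming} to show that $g_i$ \emph{is} satisfiable. You short-circuit this: from the common satisfying valuation $\xi$ of the symbolic trace and Lemma~\ref{lemma variable renaming} you produce a single witness $\xi \circ \iota_i$ for $g_i \wedge g'_i$, and then determinism immediately identifies the two transitions. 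Your route avoids Lemma~\ref{lemma injective valuations} altogether and never needs the intermediate syntactic equality $g_i \equiv g'_i$; the paper's route makes that equality explicit, which is conceptually informative but not required for the lemma. Both arguments rely on $\iota_i = \iota'_i$ (which follows from $\zeta_{i-1} = \zeta'_{i-1}$), so your observation that injectivity is dispensable here is a real simplification.
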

\iflong
\begin{proof}
We prove the lemma by contradiction. Suppose $\delta \neq \delta'$.
All symbolic runs of $\A$ share at least the initial configuration $(q_0, \zeta_0)$.
Let $\delta$ be as in Definition~\ref{def symbolic semantics}, and let $(q_{i-1}, \zeta_{i-1})$ be the last point where $\delta$ and $\delta'$ coincide.
From this point, $\delta$ continues its course with a step
\[
(q_{i-1}, \zeta_{i-1}) ~ \xrightarrow{\alpha_i, g_i, \varrho_i} ~ (q_i, \zeta_i),
\]
whereas $\delta'$ continues with a different step
\[
(q_{i-1}, \zeta_{i-1}) ~ \xrightarrow{\alpha'_i, g'_i, \varrho'_i} ~ (q'_i, \zeta'_i).
\]
Since $\strace(\delta) = \strace(\delta')$, $\alpha_i = \alpha'_i$ and $g_i[\iota_i] \equiv g'_i[\iota_i]$, where
$\iota_i = \zeta_{i-1} \cup \{ (p, v_i) \}$.
By Lemma~\ref{lemma injective valuations}, variable renaming $\iota_i$ is injective, which implies that $g_i \equiv g'_i$.
The underlying transitions $q_{i-1} \xrightarrow{\alpha_i, g_i, \varrho_i} q_i$ and $q_{i-1} \xrightarrow{\alpha'_i, g'_i, \varrho'_i} q'_i$ of $\A$ must be different, because otherwise also $\zeta_i$ and $\zeta'_i$ would be equal.
Therefore, since $\A$ is deterministic, $g_i \wedge g'_i$ is not satisfiable.  Since $g_i \equiv g'_i$, this means that $g_i$ is not satisfiable.
But since $\delta$ is a symbolic run, $G_i = g_i[\iota_i]$ is satisfiable, and thus there exists a valuation $\xi$ such that $\xi \models G_i$.  But now Lemma~\ref{lemma variable renaming} gives $\xi \circ \iota_i \models g_i$.  This means that $g_i$ is satisfiable and we have derived a contradiction. \qed
\end{proof}
\fi

Lemma~\ref{lem for each tainted trace there is a unique symbolic run} allows us to associate a unique symbolic run to each symbolic word that is accepted by a register automaton.

\begin{definition}
Let $\A$ be a register automaton and $w \in L_s(\A)$. Then we write $\symbolic(w)$ for the unique symbolic run $\delta$ of $\A$ with $\strace(\delta) = w$.
\end{definition}

There exists a one-to-one correspondence between runs of $\A$ and pairs consisting of a symbolic run of $\A$ and a satisfying assignments for the guards from its symbolic trace.

\begin{lemma}
	\label{lem run is a run}
	Let $\delta$ be a symbolic run of $\A$, as in Def.~\ref{def symbolic semantics}, and $\xi : \{ v_1 ,\ldots, v_n \} \rightarrow \D$ a valuation such that $\xi \models G_1 \wedge \cdots \wedge G_n$. Let $\run_{\A}(\delta, \xi)$ be the sequence obtained from $\delta$ by (a) replacing each input $\alpha_i$ by data symbol $\alpha_i(\xi(v_i))$ (for $0 < i \leq n$), (b) removing guards $g_i$ and assignments $\varrho_i$, and (c) replacing valuations $\zeta_i$ by $\xi_i = \xi \circ \zeta_i$ (for $0 \leq i \leq n$).
	Then $\run_{\A}(\delta,\xi)$ is a run of $\A$.
\end{lemma}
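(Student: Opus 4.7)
The proof is a direct unrolling of the definitions: for each index $i \in \{1, \ldots, n\}$ I would check that the $i$-th step of $\run_{\A}(\delta, \xi)$ satisfies the two conditions of Definition~\ref{def semantics register automata}, using the fact that the corresponding step of $\delta$ satisfies the analogous symbolic conditions. Setting $d_i = \xi(v_i)$ and $\xi_i = \xi \circ \zeta_i$, I would write $\iota_i^c = \xi_{i-1} \cup \{(p, d_i)\}$ for the concrete valuation that should satisfy $g_i$, and $\iota_i^s = \zeta_{i-1} \cup \{(p, v_i)\}$ for the symbolic counterpart used in Definition~\ref{def symbolic semantics}. The transitions $q_{i-1} \xrightarrow{\alpha_i, g_i, \varrho_i} q_i$ are inherited directly from $\delta$, so only the guard and the update remain.

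The key observation, which I would verify by evaluating on each element of $V \cup \{p\}$, is the identity
\[
\iota_i^c = \xi \circ \iota_i^s.
\]
For $x \in \domain(\zeta_{i-1}) \subseteq V$ both sides equal $\xi(\zeta_{i-1}(x)) = \xi_{i-1}(x)$, and for $x = p$ both sides equal $\xi(v_i) = d_i$. With this identity in hand, Lemma~\ref{lemma variable renaming} collapses the guard condition:
\[
\iota_i^c \models g_i \;\Longleftrightarrow\; \xi \circ \iota_i^s \models g_i \;\Longleftrightarrow\; \xi \models g_i[\iota_i^s] \;\equiv\; G_i,
\]
and the rightmost condition holds by the hypothesis $\xi \models G_1 \wedge \cdots \wedge G_n$. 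For the register update, a short composition chase gives
\[
\xi_i \;=\; \xi \circ \zeta_i \;=\; \xi \circ (\iota_i^s \circ \varrho_i) \;=\; (\xi \circ \iota_i^s) \circ \varrho_i \;=\; \iota_i^c \circ \varrho_i,
\]
as required. The base case $\xi_0 = \xi \circ \zeta_0$ has empty domain because $\zeta_0$ does.

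The only subtlety is ensuring that $\xi \circ \iota_i^s$ is defined on all of $\Var(g_i)$ so that the application of Lemma~\ref{lemma variable renaming} is legitimate: since $\delta$ is a symbolic run, $G_i \equiv g_i[\iota_i^s]$ must itself be a well-formed guard, which forces $\Var(g_i) \subseteq \domain(\iota_i^s)$, and by Lemma~\ref{lemma symbolic run} we have $\range(\iota_i^s) \subseteq \{v_1, \ldots, v_i\} \subseteq \domain(\xi)$. I do not anticipate any real obstacle; the whole argument is an exercise in bookkeeping, keeping track of which valuation lives in the symbolic world and which in the concrete one, and invoking Lemma~\ref{lemma variable renaming} at the single point where the two worlds meet.
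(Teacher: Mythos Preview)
Your proposal is correct and follows essentially the same approach as the paper: you establish the identity $\iota_i^c = \xi \circ \iota_i^s$ (the paper writes this as $\kappa_i = \xi \circ \iota_i$), then invoke Lemma~\ref{lemma variable renaming} for the guard condition and a composition chase for the update, exactly as the paper does. Your added remarks on the base case and on the domains needed for Lemma~\ref{lemma variable renaming} are extra care the paper leaves implicit.
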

\iflong
\begin{proof}
It suffices to show, for $0 < i \leq n$, that
$\kappa_i \models g_i$, where $\kappa_i = \xi_{i-1} \cup \{ (p, d_i) \}$, and
$\xi_i = \kappa_i \circ \varrho_i$, for $0 < i \leq n$.
We derive
\[
\xi \circ \iota_i = \xi \circ (\zeta_{i-1} \cup \{ p, v_i \}) = \xi \circ \zeta_{i-1} \cup \{ p, d_i \} = \xi_{i-1} \cup \{ (p,d_i) \} = \kappa_i.
\]
By assumption, $\xi \models G_i \equiv g_i[\iota_i]$. By Lemma~\ref{lemma variable renaming}, $\xi \circ \iota_i \models g_i$. Hence, by the above derivation, $\kappa_i \models g_i$, as required. We derive
\[
\xi_i = \xi \circ \zeta_i = \xi \circ (\iota_i \circ \varrho_i) = (\xi \circ \iota_i) \circ \varrho_i = \kappa_i \circ \varrho_i.
\]
Thus $\xi_i = \kappa_i \circ \varrho_i$, as required. \qed
\end{proof}
\fi

\begin{lemma}
	\label{lem for each run unique symbolic run}
	Let $\gamma$ be a run of register automaton $\A$. Then there exist a valuation $\xi$ and symbolic run $\delta$ such that $\run_{\A}(\delta,\xi) = \gamma$.
\end{lemma}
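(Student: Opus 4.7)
The plan is to build the symbolic run $\delta$ by taking the same sequence of transitions of $\A$ that is used in $\gamma$, but replacing each concrete data value $d_i$ by the marker $v_i$, and then setting $\xi(v_i) = d_i$. Concretely, given
\[
\gamma ~=~ (q_0, \xi_0) \xrightarrow{\alpha_1(d_1)} (q_1, \xi_1) \cdots \xrightarrow{\alpha_n(d_n)} (q_n, \xi_n)
\]
with underlying transitions $q_{i-1} \xrightarrow{\alpha_i, g_i, \varrho_i} q_i$, I would define
\[
\delta ~=~ (q_0, \zeta_0) \xrightarrow{\alpha_1, g_1, \varrho_1} (q_1, \zeta_1) \cdots \xrightarrow{\alpha_n, g_n, \varrho_n} (q_n, \zeta_n),
\]
where $\zeta_0$ is the empty map and, following Definition~\ref{def symbolic semantics}, $\iota_i = \zeta_{i-1} \cup \{(p, v_i)\}$ and $\zeta_i = \iota_i \circ \varrho_i$. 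The valuation is $\xi : \{v_1, \ldots, v_n\} \to \D$ given by $\xi(v_i) = d_i$.

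Next I would verify, by induction on $i$, the compatibility relation $\xi \circ \zeta_i = \xi_i$ (and correspondingly $\xi \circ \iota_i = \xi_{i-1} \cup \{(p, d_i)\}$, which is the concrete $\iota_i$ used in $\gamma$). The base case $i=0$ is immediate since both maps are empty. For the induction step, assuming $\xi \circ \zeta_{i-1} = \xi_{i-1}$,
\[
\xi \circ \iota_i ~=~ \xi \circ \left(\zeta_{i-1} \cup \{(p, v_i)\}\right) ~=~ \xi_{i-1} \cup \{(p, d_i)\},
\]
and then $\xi \circ \zeta_i = \xi \circ (\iota_i \circ \varrho_i) = (\xi \circ \iota_i) \circ \varrho_i = \xi_i$ by the definition of $\xi_i$ in $\gamma$.

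From this I would conclude that $\delta$ is a legitimate symbolic run. The only nontrivial clause to check is satisfiability of $G_1 \wedge \cdots \wedge G_n$, where $G_i \equiv g_i[\iota_i]$. Since $\iota_i \models g_i$ holds along $\gamma$ and $\xi \circ \iota_i$ equals the concrete $\iota_i$, Lemma~\ref{lemma variable renaming} gives $\xi \models g_i[\iota_i]$, i.e.\ $\xi \models G_i$ for each $i$; hence $\xi$ witnesses satisfiability of the conjunction. Finally, unwinding the definition of $\run_{\A}(\delta, \xi)$: step (a) turns each $\alpha_i$ into $\alpha_i(\xi(v_i)) = \alpha_i(d_i)$, step (b) strips the guards and assignments, and step (c) replaces $\zeta_i$ by $\xi \circ \zeta_i = \xi_i$, which yields precisely $\gamma$.

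There is no real obstacle here; the proof is essentially a bookkeeping argument. The only subtlety is making sure that the symbolic valuations really do refine the concrete ones via $\xi$, which is exactly what the inductive identity $\xi \circ \zeta_i = \xi_i$ captures, and which makes both the satisfiability of the symbolic guards and the equality $\run_{\A}(\delta,\xi) = \gamma$ fall out at once.
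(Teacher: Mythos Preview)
Your proof is correct and follows essentially the same approach as the paper: construct $\delta$ from the same underlying transitions as $\gamma$, define $\xi(v_i)=d_i$, prove by induction that $\xi\circ\zeta_i=\xi_i$ (and hence $\xi\circ\iota_i$ equals the concrete extended valuation), and then use Lemma~\ref{lemma variable renaming} to obtain $\xi\models G_i$ and verify $\run_{\A}(\delta,\xi)=\gamma$. The paper additionally remarks that the underlying transitions are uniquely determined by determinism, but this is not needed for the existence claim in the lemma statement, so your omission of it is harmless.
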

\iflong
\begin{proof}
	Let $\gamma$ be as in Definition~\ref{def semantics register automata}:
	\begin{eqnarray*}
		\gamma & = & (q_0, \xi_0) ~ \xrightarrow{\alpha_1(d_1)} ~ 
		(q_1, \xi_1) \quad
		\ldots \quad
		(q_{n-1}, \xi_{n-1}) ~ \xrightarrow{\alpha_n(d_n)} ~ (q_n, \xi_n ),
	\end{eqnarray*}
	where, for $0 \leq i \leq n$, $(q_i, \xi_i)$ is a configuration of $\A$, $\domain(\xi_0) = \emptyset$, and for $0 < i \leq n$, $\Gamma$ contains a transition $q_{i-1} \xrightarrow{\alpha_i, g_i, \varrho_i} q_i$ such that
	\begin{itemize}
		\item
		$\iota'_i \models g_i$, where $\iota'_i = \xi_{i-1} \cup \{ (p, d_i) \}$, and
		\item
		$\xi_i = \iota'_i \circ \varrho_i$.
	\end{itemize}
Since $\A$ is deterministic, the transitions $q_{i-1} \xrightarrow{\alpha_i, g_i, \varrho_i} q_i$ are uniquely determined.
Let $\zeta_0$ be the trivial variable renaming with empty domain and, for $0 < i \leq n$, define $\zeta_i$ inductively by
$\zeta_i = \iota_i \circ \varrho_i$ and $\iota_i = \zeta_{i-1} \cup \{ (p, v_i) \}$.
Let $\xi : \{ v_1 ,\ldots, v_n \} \rightarrow \D$ be given by $\xi(v_i) = d_i$, for $1 \leq i \leq n$, and
let $\delta$ be the sequence
\[
\delta ~=~ (q_0, \zeta_0) ~ \xrightarrow{\alpha_1, g_1, \varrho_1} ~ 
(q_1, \zeta_1) ~
\ldots ~ \xrightarrow{\alpha_n, g_n, \varrho_n} ~
(q_n, \zeta_n).
\]
We claim that $\delta$ is a symbolic execution of $\A$.  For this, it suffices to show that $\xi \models G_1 \wedge \cdots \wedge G_n$, where $G_i \equiv g_i[\iota_i]$.
By induction on $i$ we show
\begin{eqnarray*}
	\iota'_i & = & \xi \circ \iota_i \mbox{ for } 0 < i \leq n\\
		\xi_i & = & \xi \circ \zeta_i \mbox{ for } 0 \leq i \leq n
\end{eqnarray*}
\begin{enumerate}
	\item 
	Base. $\xi_0 = \xi \circ \zeta_0$, as both $\xi_0$ and $\zeta_0$ have empty domain.
	\item
	Induction step.
	\begin{eqnarray*}
		\iota'_i & = & \xi_{i-1} \cup \{ (p, d_i) \} = \mbox{(by induction hypothesis)}\\
		&= & \xi \circ \zeta_{i-1} \cup \{ (p, d_i) \} = \xi \circ (\zeta_{i-1} \cup \{(p, v_i) \}) = \xi \circ \iota_i\\
		\xi_i & = & \iota'_i \circ \varrho_i = (\xi \circ \iota_i) \circ \varrho_i = \xi \circ (\iota_i \circ \varrho_i) = \xi \circ \zeta_i
	\end{eqnarray*}
\end{enumerate}
Let $0 \leq i \leq n$. Since $\gamma$ is a run, $\iota'_i \models g_i$.  By the identity we  just derived,
$\xi \circ \iota_i \models g_i$. By Lemma~\ref{lemma variable renaming}, $\xi \models g_i[\iota_i] \equiv G_i$.
Hence $\xi \models G_1 \wedge \cdots \wedge G_n$, which proves the claim that $\delta$ is a symbolic execution of $\A$. 
It is easy to verify that $\gamma = \run(\delta,\xi)$. \qed
\end{proof}
\fi

Using the above lemmas, we can show that
whenever two register automata accept the same symbolic language, they also accept the same data language.

\begin{theorem}
	\label{theorem symbolic language equivalence refines language equivalence}
Suppose $\A$ and $\B$ are register automata with $L_s(\A) = L_s(\B)$. Then $L(\A) = L(\B)$.
\end{theorem}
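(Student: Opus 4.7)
The plan is to use the correspondence between concrete runs and pairs (symbolic run, satisfying valuation) provided by Lemmas~\ref{lem run is a run} and~\ref{lem for each run unique symbolic run} to transfer accepted data words between $\A$ and $\B$. By symmetry it suffices to show $L(\A) \subseteq L(\B)$.

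First I would pick an arbitrary $w \in L(\A)$, so that $\A$ has a run $\gamma$ with $\trace(\gamma) = w$. Writing $w = \alpha_1(d_1) \cdots \alpha_n(d_n)$, Lemma~\ref{lem for each run unique symbolic run} gives a symbolic run $\delta$ of $\A$ and a valuation $\xi : \{v_1,\ldots,v_n\} \rightarrow \D$ with $\xi(v_i) = d_i$ such that $\gamma = \run_{\A}(\delta, \xi)$ and $\xi \models \guard(\strace(\delta))$. Let $u = \strace(\delta) = \alpha_1 G_1 \cdots \alpha_n G_n$; then $u \in L_s(\A)$.

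By assumption $L_s(\A) = L_s(\B)$, so $u \in L_s(\B)$ as well, and hence there is a symbolic run $\delta'$ of $\B$ with $\strace(\delta') = u$. Crucially, because symbolic traces fix the sequence of input symbols and the guards $G_i$, both $\delta$ and $\delta'$ share the very same input symbols $\alpha_i$ and the same combined guard $\guard(u) = G_1 \wedge \cdots \wedge G_n$. Therefore the valuation $\xi$ that witnessed satisfiability for $\delta$ also witnesses satisfiability for $\delta'$, so Lemma~\ref{lem run is a run} applies and $\gamma' := \run_{\B}(\delta', \xi)$ is a run of $\B$.

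Finally, by the construction of $\run_{\B}(\delta',\xi)$ in Lemma~\ref{lem run is a run}, the data symbols along $\gamma'$ are exactly $\alpha_i(\xi(v_i)) = \alpha_i(d_i)$, so $\trace(\gamma') = w$, and thus $w \in L(\B)$. The argument is symmetric in $\A$ and $\B$, giving $L(\A) = L(\B)$. I do not expect serious obstacles here: the entire burden of the proof has already been discharged in the two lemmas about the $\run$ construction, and the only point requiring a moment's care is observing that a single $\xi$ works uniformly for any symbolic run with symbolic trace $u$, since satisfiability depends only on $\guard(u)$, not on the internal structure of the run in a particular automaton.
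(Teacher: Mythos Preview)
Your proposal is correct and follows essentially the same route as the paper's own proof: lift a concrete run of $\A$ to a symbolic run via Lemma~\ref{lem for each run unique symbolic run}, transfer the symbolic trace to $\B$ using $L_s(\A)=L_s(\B)$, and then push it back down to a concrete run of $\B$ via Lemma~\ref{lem run is a run} using the same valuation $\xi$. Your observation that the same $\xi$ works because satisfiability depends only on $\guard(u)$ is exactly the point the paper makes when noting that $w$ and $w'$ share the same input symbols and data values.
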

\iflong
\begin{proof}
	We will prove $L(\A) \subseteq L(\B)$.  The proof of the inclusion $L(\A) \subseteq L(\B)$ is symmetric.
	Suppose $w \in L(\A)$.
	Then there exists a run $\gamma$ of $\A$ with $\trace(\gamma) = w$.
	By Lemma~\ref{lem for each run unique symbolic run}, there exist a valuation $\xi$ and symbolic run $\delta$ of $\A$ such that $\run_{\A}(\delta,\xi) = \gamma$.
	Let $u = \strace(\delta)$.
	Then $u \in L_s(\A)$ and, since $L_s(\A) = L_s(\B)$, $u \in L_s(\B)$.
	Let $\delta'$ be a symbolic run of $\B$ such that $\strace(\delta') = u$.
	Let $\gamma' = \run_{\B}(\delta', \xi)$. By Lemma~\ref{lem run is a run}, $\gamma'$ is a run of $\B$. Let $w' = \trace(\gamma')$. Then $w' \in L(\B)$.
	Note that $w$ and $w'$ share the same sequence of data values, as given by valuation $\xi$.
	Also note that $w$, $\gamma$, $\delta$, $u$, $\delta'$, $\gamma'$ and $w'$ all share the same sequence of input symbols.
	Thus $w = w'$ and $w \in L(\B)$, as required. \qed
\end{proof}
\fi

\begin{example}
The converse of Theorem~\ref{theorem symbolic language equivalence refines language equivalence} does not hold.
Figure~\ref{fig counterexample} gives a trivial example of two register automata with the same data language but a different symbolic language.
\begin{figure}[h!]
	\begin{center}
		\begin{tikzpicture}[shorten >=1pt,node distance=2cm,on grid,auto] 
		\node[state,initial,accepting] (q_0)   {$q_0$}; 
			\node[state,color=white] (q_x) [ right=of q_0] {};
		\node[state,initial, accepting] (q_1) [ right=of q_x] {$q_1$}; 
		\node[state,accepting] (q_2) [below=of q_0] {$q_2$};
		\node[state,accepting] (q_3) [below=of q_1] {$q_3$};
		\path[->]
		(q_0) edge [bend left] node[right] {$a,  p > 0$} (q_2)
		(q_0) edge [bend right] node[left] {$a,  p\leq 0$} (q_2)
		(q_1) edge  node {$a$} (q_3);
		\end{tikzpicture}
		\caption{Two register automata that are trace equivalent but not symbolic trace equivalent.}
		\label{fig counterexample}
	\end{center}
\end{figure}
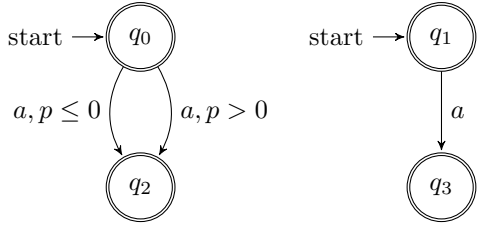
\end{example}

Lemma~\ref{lem run is a run} allows us to rephrase the well-formedness condition of register automata in terms of symbolic runs.

\begin{corollary}
	\label{variables guards always defined V2}	
Register automaton $\A$ is well-formed iff, for each symbolic run $\delta$ that ends with $(q, \zeta)$,
$q \xrightarrow{\alpha, g \varrho} q'  \Rightarrow  \Var(g) \subseteq \domain(\zeta) \cup \{ p \}$.
\end{corollary}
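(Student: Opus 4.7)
The plan is to use the two correspondences established in Lemmas~\ref{lem run is a run} and~\ref{lem for each run unique symbolic run}, which give a tight bijection between concrete runs and (symbolic run, satisfying valuation) pairs. The technical glue is the observation that for a symbolic run ending in $(q,\zeta)$ and any satisfying valuation $\xi \colon \{v_1,\ldots,v_n\} \to \D$ for its guards, we have $\domain(\xi \circ \zeta) = \domain(\zeta)$; this follows because $\range(\zeta) \subseteq \{v_1,\ldots,v_n\} = \domain(\xi)$ by Lemma~\ref{lemma symbolic run}.

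For the forward implication, I would assume $\A$ is well-formed and take an arbitrary symbolic run $\delta$ ending in $(q,\zeta)$ together with an outgoing transition $q \xrightarrow{\alpha, g, \varrho} q'$. Because $\delta$ is a symbolic run, its accumulated guard $G_1 \wedge \cdots \wedge G_n$ is satisfiable, so pick any $\xi$ satisfying it. By Lemma~\ref{lem run is a run}, $\run_{\A}(\delta,\xi)$ is a concrete run of $\A$ ending in the configuration $(q, \xi \circ \zeta)$, which is therefore reachable. Well-formedness of $\A$ gives $\Var(g) \subseteq \domain(\xi \circ \zeta) \cup \{p\}$, and by the domain identity above this is exactly $\domain(\zeta) \cup \{p\}$.

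For the reverse implication, I would assume the symbolic condition and take an arbitrary reachable $(q,\xi)$, witnessed by a concrete run $\gamma$, together with an outgoing transition $q \xrightarrow{\alpha, g, \varrho} q'$. By Lemma~\ref{lem for each run unique symbolic run}, there exist a symbolic run $\delta$ and a valuation $\xi'$ with $\run_{\A}(\delta,\xi') = \gamma$; inspecting the construction, $\delta$ ends with some $(q,\zeta)$ for which $\xi = \xi' \circ \zeta$. The hypothesis yields $\Var(g) \subseteq \domain(\zeta) \cup \{p\}$, and the domain identity again gives $\Var(g) \subseteq \domain(\xi) \cup \{p\}$.

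The only real subtlety — the main obstacle, such as it is — is keeping straight which valuation plays which role in the two lemmas and verifying that $\domain(\xi \circ \zeta) = \domain(\zeta)$ whenever $\xi$ is a satisfying valuation produced from (or consumed by) those lemmas; once Lemma~\ref{lemma symbolic run} is invoked to bound $\range(\zeta)$, the rest is bookkeeping. No additional machinery beyond the lemmas already in the excerpt is needed.
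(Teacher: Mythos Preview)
Your proposal is correct and follows essentially the same route as the paper's proof: both directions hinge on Lemmas~\ref{lem run is a run} and~\ref{lem for each run unique symbolic run} together with the observation that the concrete and symbolic final valuations have the same domain. If anything, you are more explicit than the paper in justifying $\domain(\xi\circ\zeta)=\domain(\zeta)$ via Lemma~\ref{lemma symbolic run}, whereas the paper simply asserts the domain equality.
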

\iflong
\begin{proof}
	``$\Rightarrow$''
Suppose symbolic execution $\delta$ is defined as in Definition~\ref{def symbolic semantics}. Let $\xi : \{ v_1 ,\ldots, v_n \} \rightarrow \D$ be a valuation such that $\xi \models G_1 \wedge \cdots \wedge G_n$. (Such a valuation $\xi$ exists since, by definition of a symbolic run, $G_1 \wedge \cdots \wedge G_n$ is satisfiable.) By Lemma~\ref{lem run is a run}, $\gamma = \run(\delta,\xi)$ is a run of $\A$.  By construction of $\gamma$, $\gamma$ ends with a reachable configuration $(q, \xi)$, where $\domain(\xi) = \domain(\zeta)$.  Now we may apply the definition of well-formedness to conclude $\Var(g) \subseteq \domain(\zeta) \cup \{ p \}$.

``$\Leftarrow$''
Suppose that for each symbolic run $\delta$
that ends with $(q, \zeta)$, we have
$q \xrightarrow{\alpha, g \varrho} q'  \Rightarrow  \Var(g) \subseteq \domain(\zeta) \cup \{ p \}$.
Let $(q, \xi)$ be the final configuration of a run $\gamma$ of $\A$ with $q \xrightarrow{\alpha, g \varrho} q'$.
By Lemma~\ref{lem for each run unique symbolic run},
there exist a valuation $\xi'$ and symbolic run $\delta$ such that $\run_{\A}(\delta,\xi') = \gamma$.
Let $(q, \zeta)$ be the final configuration of symbolic run $\delta$. 
Then by the assumption, $\Var(g) \subseteq \domain(\zeta) \cup \{ p \}$.
Therefore, since $\domain(\xi)=\domain(\zeta)$, $\Var(g) \subseteq \domain(\zeta) \cup \{ p \}$ and we may conclude that $\A$ is well-formed. \qed
\end{proof}
\fi

\section{A Myhill-Nerode Theorem}
The Nerode equivalence \cite{nerode58,HU79} deems two words $w$ and $w'$ of a language $L$ equivalent if there does not exist a suffix $u$ that distinguishes them, that is, only one of the words $w \cdot u$ and $w' \cdot u$ is in $L$. The Myhill-Nerode theorem states that $L$ is regular if and only if this equivalence relation has a finite index, and moreover that the number of states in the smallest deterministic finite automaton (DFA) recognizing $L$ is equal to the number of equivalence classes.
In this section, we present a Myhill-Nerode theorem for symbolic languages and register automata. We use three relations $\equiv_l$, $\equiv_t$ and $\equiv_r$ on symbolic words to capture the structure of register automata. Intuitively,
symbolic words $w$ and $w'$ are \emph{location equivalent}, notation $w \equiv_l w'$, if they lead to the same location, \emph{transition equivalent}, notation $w \equiv_t w'$, if they share the same final transition, and marker $v$ of $w$, and marker $v'$ of $w'$ are \emph{register equivalent}, notation $(w,v) \equiv_r (w',v')$, when they are stored in the same register after occurrence of words $w$ and $w'$.
Whereas $\equiv_l$ and $\equiv_t$ are equivalence relations, $\equiv_r$ is a partial equivalence relation (PER), that is, a relation that is symmetric and transitive.  Relation $\equiv_r$ is not necessarily reflexive, as $(w,v) \equiv_r (w,v)$ only holds when marker $v$ is stored after symbolic trace $w$.
Since a register automaton has finitely many locations, finitely many transitions, and finitely many registers,
the equivalences $\equiv_l$ and $\equiv_t$, and the equivalence induced by $\equiv_r$, are all required to have finite index.

\begin{definition}
\label{def regular}
A feasible symbolic language $L$ over $\Sigma$ is \emph{regular} iff there exist three relations:
\begin{itemize}
\item
an equivalence relation $\equiv_l$ on $L$, called \emph{location equivalence},
\item
an equivalence relation $\equiv_t$ on $L \setminus \{ \epsilon\}$, called \emph{transition equivalence}, and
\item
a partial equivalence relation $\equiv_r$ on $\{ (w,v_i) \in L \times \V \mid i \leq \length(w) \}$, called \emph{register equivalence}, satisfying
$(w,v) \equiv_r (w,v')  \Rightarrow  v=v'$.
We say that $w$ \emph{stores} $v$ if $(w, v) \equiv_r (w, v)$.
\end{itemize}
We require that equivalences $\equiv_l$ and $\equiv_t$, as well as the equivalence relation obtained by restricting $\equiv_r$ to $\{ (w, v) \in L \times \V \mid w \mbox{ stores } v \}$ have finite index.
Given $w$, $w'$ and $v$, there is at most one $v'$ s.t.\  $(w,v) \equiv_r (w',v')$. Therefore, we may define $\matching(w, w')$ as the variable renaming $\sigma$ satisfying:
\begin{eqnarray*}
\sigma(v) & = & \left\{ \begin{array}{ll} v'  & \mbox{if } (w,v) \equiv_r (w',v')\\
	v_{n+1} & \mbox{if } v=v_{m+1}\\
	\mbox{undefined} & \mbox{otherwise}
\end{array} \right.
\end{eqnarray*}	
Finally, we require that relations $\equiv_l$, $\equiv_t$ and $\equiv_r$ satisfy the conditions of Table~\ref{table conditions}, for
$w, w', u, u' \in L$, $\length(w)=m$, $\length(w')=n$, $\alpha, \alpha' \in \Sigma$, $G, G'$ guards, $v, v' \in \V$, and  $\sigma : \V \rightharpoonup \V$. 
\begin{table}
\begin{tcolorbox}	
\begin{eqnarray}
&& (w,v) \equiv_r (w,v')  \Rightarrow  v=v' \label{cond r marker}  \\
&&w \alpha G \equiv_t w' \alpha' G'  ~ \Rightarrow ~ w \equiv_l w'  
\label{cond t source}\\
&&w \alpha G \equiv_t w' \alpha' G'  ~ \Rightarrow ~  \alpha=\alpha'  
\label{cond t input}\\
&&w \alpha G \equiv_t w' \alpha G'  \wedge \sigma=\matching(w, w') ~ \Rightarrow ~ G[\sigma] \equiv G'  
\label{cond t guard}\\
&&w \equiv_t w'  \Rightarrow  w \equiv_l w'  
\label{cond t target}\\
&& w \equiv_t w' \wedge w \mbox{ stores } v_m  \Rightarrow (w, v_m) \equiv_r (w', v_n)  
\label{cond t parameter}\\
&& u \equiv_t  u'  \wedge u = w \alpha G \wedge u' = w' \alpha G' \wedge  (w,v) \equiv_r (w',v') \wedge u \mbox{ stores } v\nonumber\\
&& \quad\quad \Rightarrow (u, v) \equiv_r (u', v')  
\label{cond r forward propagation}\\
&& u \equiv_t  u'  \wedge u = w \alpha G \wedge u' = w' \alpha G' \wedge  (u,v) \equiv_r (u',v') \wedge v \neq v_{m+1} \nonumber\\
&& \quad\quad \Rightarrow  (w, v) \equiv_r (w', v') 
\label{cond r backward propagation}\\
&& w \equiv_l w' \wedge w \alpha G \in L \wedge v \in \Var(G)\setminus\{ v_{m+1} \} \nonumber\\
&& \quad\quad \Rightarrow  \exists v' : (w,v) \equiv_r (w',v') 
\label{cond valuation defined for variables in guards}	\\
&& w \equiv_l w'  \wedge w \alpha G \in L \wedge \sigma=\matching(w, w')  \nonumber\\
&& \quad\quad  \wedge  ~ \satisfiable(\guard(w') \wedge  G[\sigma]) \Rightarrow  w' \alpha G[\sigma] \in L  
\label{cond right invariance}\\
&& w \equiv_l w' \wedge w \alpha G \in L \wedge w' \alpha G' \in L \wedge \sigma=\matching(w, w')\nonumber \\
&& \quad\quad \wedge ~ \satisfiable(G[\sigma] \wedge G')  \Rightarrow  w \alpha G \equiv_t w' \alpha G'   
\label{cond determinism}
\end{eqnarray}
\end{tcolorbox}
\caption{Conditions for regularity of symbolic languages.}
\label{table conditions}
\end{table}
\end{definition}
Intuitively, the first condition captures that a register can store at most a single value at a time.
When $w \alpha G$ and $w' \alpha' G'$ share the same final transition, then in particular $w$ and $w'$ share the same final location (Condition~\ref{cond t source}), input symbols $\alpha$ and $\alpha'$ are equal (Condition~\ref{cond t input}), $G'$ is just a renaming of $G$ (Condition~\ref{cond t guard}), and $w \alpha G$ and $w' \alpha' G'$ share the same final location (Condition~\ref{cond t target}) and final assignment
(Conditions \ref{cond t parameter}, \ref{cond r forward propagation} and \ref{cond r backward propagation}).
Condition~\ref{cond t parameter} says that the parameters of the final input end up in the same register when they are stored.
Condition~\ref{cond r forward propagation} says that when two values are stored in the same register, they will stay in the same register as long as they are stored (this condition can be viewed as a right invariance condition for registers).
Conversely, if two values are stored in the same register after a transition, and they do not correspond to the final input, they were already  stored in the same register before the transition (Condition~\ref{cond r backward propagation}).
Condition~\ref{cond valuation defined for variables in guards} captures the well-formedness assumption for register automata.
As a consequence of Condition~\ref{cond valuation defined for variables in guards}, $G[\sigma]$ is defined in Conditions~\ref{cond t guard}, \ref{cond right invariance} and \ref{cond determinism}, since $\Var(G) \subseteq\domain(\sigma)$.
Condition~\ref{cond right invariance} is the equivalent for symbolic languages of the well-known right invariance condition for regular languages.
For symbolic languages a right invariance condition 
\begin{eqnarray*}
	&& w \equiv_l w'  \wedge w \alpha G \in L \wedge \sigma=\matching(w, w')  \Rightarrow  w' \alpha G[\sigma] \in L
\end{eqnarray*}
would be too strong:
even though $w$ and $w'$ lead to the same location, the values stored in the registers may be different, and therefore they will not necessarily enable the same transitions.
However, when in addition $\guard(w') \wedge G[\sigma]$ is satisfiable, we may conclude that $w'' \alpha G[\sigma] \in L$.	
Condition~\ref{cond determinism}, finally, asserts that $L$ only allows deterministic behavior.

The simple lemma below asserts that, due to the determinism imposed by Condition~\ref{cond determinism}, the converse of Conditions~\ref{cond t source}, \ref{cond t input} and \ref{cond t guard} combined also holds. This means that $\equiv_t$ can be expressed in terms of $\equiv_l$ and $\equiv_r$, that is, once we have fixed $\equiv_l$ and $\equiv_r$, relation $\equiv_t$ is fully determined.

\begin{lemma}
\label{lem determinism corollary}
Suppose symbolic language $L$ over $\Sigma$ is regular, and equivalences $\equiv_l$, $\equiv_t$ and $\equiv_r$ satisfy the conditions of Definition~\ref{def regular}. Then
\begin{eqnarray*}
&& w \equiv_l w' \wedge w \alpha G \in L \wedge w' \alpha G' \in L \wedge \sigma=\matching(w, w') \wedge G' \equiv G[\sigma]\nonumber\\
&&\quad\quad \Rightarrow w \alpha G \equiv_t w' \alpha G'.
\end{eqnarray*}
\end{lemma}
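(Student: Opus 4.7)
The plan is to reduce the claim directly to Condition~\ref{cond determinism} (the determinism condition) by verifying its satisfiability hypothesis. All other hypotheses of Condition~\ref{cond determinism} are already present in the premise of the lemma, so the only thing left to check is $\satisfiable(G[\sigma] \wedge G')$.

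First I would observe that, because $L$ is regular, it is in particular feasible. Hence every word in $L$ has a satisfiable overall guard; applying this to $w' \alpha G' \in L$ gives that $\guard(w') \wedge G'$ is satisfiable, and in particular there exists a valuation $\xi$ with $\xi \models G'$. Since by assumption $G' \equiv G[\sigma]$ (syntactic equality of guards), we immediately obtain $\xi \models G[\sigma]$ as well, and therefore $\xi \models G[\sigma] \wedge G'$, establishing $\satisfiable(G[\sigma] \wedge G')$.

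Having verified the satisfiability premise, I would finish by applying Condition~\ref{cond determinism} directly to the hypotheses $w \equiv_l w'$, $w \alpha G \in L$, $w' \alpha G' \in L$, $\sigma = \matching(w,w')$ together with $\satisfiable(G[\sigma] \wedge G')$, yielding the desired conclusion $w \alpha G \equiv_t w' \alpha G'$.

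There is essentially no obstacle here: the lemma is really just the observation that the syntactic equality $G' \equiv G[\sigma]$ is strong enough to imply joint satisfiability of $G[\sigma]$ and $G'$ (given that $G'$ is satisfiable at all, which follows from feasibility of $L$), so that the partial converse of Conditions~\ref{cond t source}, \ref{cond t input}, \ref{cond t guard} collapses into a single invocation of Condition~\ref{cond determinism}. The only subtlety worth spelling out is that $G[\sigma]$ is well-defined, which holds because Condition~\ref{cond valuation defined for variables in guards} guarantees $\Var(G) \setminus \{v_{m+1}\} \subseteq \domain(\sigma)$ and $\sigma$ is defined on $v_{m+1}$ by construction.
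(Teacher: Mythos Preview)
Your proposal is correct and follows essentially the same approach as the paper: use feasibility of $L$ to get satisfiability of $G'$, combine with $G' \equiv G[\sigma]$ to obtain $\satisfiable(G[\sigma] \wedge G')$, and then invoke Condition~\ref{cond determinism}. Your additional remarks on the well-definedness of $G[\sigma]$ via Condition~\ref{cond valuation defined for variables in guards} are a nice touch but not strictly needed, as the paper already notes this immediately after stating the conditions.
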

\iflong
\begin{proof}
Suppose the left hand side of the above implication holds.
Since $L$ is regular, it is in particular feasible, and therefore $G'$ is satisfiable. But then, since $G' \equiv G[\sigma]$, also $G[\sigma] \wedge G'$ is satisfiable. Therefore, Condition~\ref{cond determinism} implies that the right hand side of the implication holds. \qed
\end{proof}
\fi

\begin{example}
	Even though $\equiv_t$ can be expressed in terms of $\equiv_l$ and $\equiv_r$, there are symbolic languages that satisfy all the conditions for regularity, except the condition that $\equiv_t$ has finite index. So when $\equiv_l$ and $\equiv_r$ have finite index, this does not imply that $\equiv_t$ has finite index. An example of such a language is:
	\begin{eqnarray*}
		L & = & \{ \epsilon, a ~ v_1 = 1, a ~ v_1 =2, a ~ v_1 = 3, \ldots \}.
	\end{eqnarray*}
Here we assume the set $R$ of relation symbols contains unary relations ``$. = i$'', for each natural number $i$.
For language $L$ we may define an equivalence relation $\equiv_l$ comprising two equivalence classes $\{ \epsilon \}$ and $L \setminus \{ \epsilon \}$.
No values need to be stored and we may thus define $\equiv_r$ to be the empty relation.
The guards of all nonempty symbolic words are different, both syntactically and semantically.
Therefore, by Condition~\ref{cond t  guard}, $\equiv_t$ must have infinitely many equivalence classes, one for each nonempty word in $L$.
The reader may check that, with these definitions of $\equiv_l$, $\equiv_t$ and $\equiv_r$, all conditions of regularity are met, except that $\equiv_t$ has infinite index.
\end{example}	
	
We can now state 
\iflong 
and prove
\fi
our ``symbolic'' version of the celebrated result of Myhill \& Nerode.
\iflong
First we prove that the 
\else
The
\fi
symbolic language of any register automaton is regular (Theorem~\ref{theorem soundness}), and
\iflong
then we establish that 
\fi
any regular symbolic language can be obtained as the symbolic language of some register automaton (Theorem~\ref{theorem completeness}).

\begin{theorem}
\label{theorem soundness}
Suppose $\A$ is a register automaton.  Then $L_s(\A)$ is regular.
\end{theorem}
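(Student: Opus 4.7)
The plan is to associate to each $w \in L_s(\A)$ its unique symbolic run $\symbolic(w)$ (guaranteed by Lemma~\ref{lem for each tainted trace there is a unique symbolic run}), ending in a configuration $(q_w, \zeta_w)$, and then define the three relations as follows: $w \equiv_l w'$ iff $q_w = q_{w'}$; $w \equiv_t w'$ iff $\symbolic(w)$ and $\symbolic(w')$ end with the same transition of $\Gamma$; and $(w, v) \equiv_r (w', v')$ iff there exists a register $x \in V$ with $\zeta_w(x) = v$ and $\zeta_{w'}(x) = v'$. Symmetry of $\equiv_r$ is immediate, and transitivity uses injectivity of $\zeta$ (Lemma~\ref{lemma injective valuations}). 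Finite index of $\equiv_l$ and $\equiv_t$ follows from the finiteness of $Q$ and $\Gamma$; for $\equiv_r$, injectivity of $\zeta_w$ ensures that each stored pair $(w, v)$ has a unique witnessing register $x \in V$, so the restricted equivalence has at most $|V|$ classes. Feasibility of $L_s(\A)$ is Lemma~\ref{lemma tainted language register automaton is feasible}.

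It then remains to verify the eleven conditions of Table~\ref{table conditions}. Most are straightforward unpackings of the definition of symbolic run together with the identity $\zeta = \iota \circ \varrho$ where $\iota$ extends the previous valuation by $(p, v_k)$ for the fresh marker index $k$. Conditions~\ref{cond t source}--\ref{cond t target} follow because the last transition determines source, input, target, and guard up to renaming of markers; Condition~\ref{cond r marker} is immediate because $\zeta_w$ is a function. The register-tracking conditions (\ref{cond t parameter})--(\ref{cond r backward propagation}) reduce to a case split on whether $\varrho(x) = p$ (the register captures the new parameter) or $\varrho(x) \in V$ (it inherits a previously stored value), and in each case sharing the final transition forces the corresponding markers in $w$ and $w'$ to be $\equiv_r$-related via the same register. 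Condition~\ref{cond valuation defined for variables in guards} is where the well-formedness of $\A$ is invoked: any $v \in \Var(G) \setminus \{v_{m+1}\}$ has the form $\iota(x)$ for some $x \in V \cap \Var(g)$, and well-formedness guarantees $x \in \domain(\zeta_w) \cap \domain(\zeta_{w'})$.

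The main obstacles are Conditions~\ref{cond t guard}, \ref{cond right invariance}, and~\ref{cond determinism}, which demand careful bookkeeping of variable renamings. The key technical identity is $g[\iota][\sigma] \equiv g[\iota']$ whenever $\sigma = \matching(w, w')$ and $\iota, \iota'$ are the renamings $\zeta_w \cup \{(p, v_{m+1})\}$, $\zeta_{w'} \cup \{(p, v_{n+1})\}$ associated with a common final transition; this says precisely that the register-wise correspondence encoded by $\matching$ agrees with the composition that sends $\iota$ to $\iota'$ on the relevant variables. Once this identity is in hand, Condition~\ref{cond right invariance} is obtained by extending $\symbolic(w')$ by the same transition $q \xrightarrow{\alpha, g, \varrho} q''$ that witnesses $w \alpha G \in L_s(\A)$, producing a symbolic run whose trace is $w' \alpha G[\sigma]$; the satisfiability hypothesis $\satisfiable(\guard(w') \wedge G[\sigma])$ ensures this extension is admissible. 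Condition~\ref{cond determinism} is dual: writing the final guards of $w \alpha G$ and $w' \alpha G'$ as $g_1[\iota]$ and $g_2[\iota']$ respectively, the same identity yields $G[\sigma] \wedge G' \equiv (g_1 \wedge g_2)[\iota']$, which by Lemma~\ref{lemma variable renaming} and injectivity of $\iota'$ is satisfiable iff $g_1 \wedge g_2$ is, forcing the two terminal transitions to coincide by determinism of $\A$ and hence $w \alpha G \equiv_t w' \alpha G'$. I expect the manipulation of $\matching$ and composition of variable renamings to be the most error-prone part of the argument.
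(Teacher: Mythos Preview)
Your proposal is correct and follows essentially the same route as the paper: the three relations are defined exactly as in the paper (final location, final transition, common witnessing register), finite index comes from finiteness of $Q$, $\Gamma$, $V$, and the eleven conditions are verified via the identity $\sigma\circ\iota=\iota'$ on the relevant variables together with well-formedness (Corollary~\ref{variables guards always defined V2}). Your handling of Condition~\ref{cond determinism} is in fact slightly more streamlined than the paper's---you write $G[\sigma]\wedge G'\equiv(g_1\wedge g_2)[\iota']$ directly, whereas the paper introduces an auxiliary renaming $\sigma'$---but both arguments rely on the same observation that well-formedness guarantees $\Var(g_1)\subseteq\domain(\iota')$ even though $g_1$ comes from $w$'s last transition; you should make that dependence explicit when you write it out, since the injectivity of $\iota'$ you cite is not what is doing the work there.
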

\begin{proof}
	\iflong
Let $L = L_s(\A)$.
\else
\emph{(outline)} Let $L = L_s(\A)$.
\fi
Then, by Lemma~\ref{lemma tainted language register automaton is feasible}, $L$ is feasible.
Define equivalences $\equiv_l$, $\equiv_t$ and $\equiv_r$ as follows:
\begin{itemize}
	\item
	For $w, w' \in L$, $w \equiv_l w'$ iff $\btr(w)$ and $\btr(w')$ share the same final location. 
	\item
	For $w, w' \in L \setminus \{ \epsilon\}$, $w \equiv_t w'$ iff $\btr(w)$ and $\btr(w')$ share the same final transition. 	
	\item
	For $w, w' \in L$ and $v, v' \in \V$,  $(w,v) \equiv_r (w',v')$ iff there is a register $x \in V$ such that the final valuations $\zeta$ of $\btr(w)$ stores $v$ in $x$, and the final valuation $\zeta'$ of $\btr(w')$ stores $v'$ in $x$, that is, $\zeta(x) = v$ and $\zeta'(x) = v'$.\\
	(Note that, by Lemma~\ref{lemma symbolic run}, $\range(\zeta) \subseteq \{ v_1,\ldots, v_m \}$, for $m = \length(w)$, and $\range(\zeta') \subseteq \{ v_1,\ldots, v_n \}$, for $n = \length(w')$.) 
\end{itemize} 
Then $\equiv_l$ has finite index since $\A$ has a finite number of locations,
$\equiv_t$ has finite index since $\A$ has a finite number of transitions, and
the equivalence induced by $\equiv_r$ has finite index since $\A$ has a finite number of registers.
\iflong

Assume $w, w' \in L$, where $w$ contains $m$ input symbols and $w'$ contains $n$ input symbols.
Let
\begin{eqnarray*}
\btr(w) & = & (q_0, \zeta_0) ~ \xrightarrow{\alpha_1, g_1, \varrho_1} ~ 
(q_1, \zeta_1) ~
\ldots ~ (q_{m-1}, \zeta_{m-1}) \xrightarrow{\alpha_m, g_m, \varrho_m} ~
(q_m, \zeta_m),\\
\btr(w') & = & (q'_0, \zeta'_0) ~ \xrightarrow{\alpha'_1, g'_1, \varrho'_1} ~ 
(q'_1, \zeta'_1) ~
\ldots ~ (q'_{n-1}, \zeta'_{n-1}) \xrightarrow{\alpha'_n, g'_n, \varrho'_n} ~
(q'_n, \zeta'_n),
\end{eqnarray*}
as in Definition~\ref{def symbolic semantics}.
%
We show that all 11 conditions of Table~\ref{table conditions} hold:
\begin{itemize}
\item 
Condition~\ref{cond r marker}.  If $(w,v) \equiv_r (w,v')$ then $v$ and $v'$ are stored in the same register $x$ in the final valuation $\zeta_m$ of $\btr(w)$. Thus $v = \zeta_m(x) = v'$. 	
\item 
Condition~\ref{cond t source}. If $\btr(w \alpha G)$ and $\btr(w' \alpha' G')$ share the same final transition, then $\btr(w)$ and $\btr(w')$ certainly share the same final location.
\item 
Condition~\ref{cond t input}. If $\btr(w \alpha G)$ and $\btr(w' \alpha' G')$ share the same final transition, then $\alpha$ and $\alpha'$ must be equal to the input symbols of this final transition, and thus equal to each other.
\item 
Condition~\ref{cond t guard}. Assume $w \alpha G \equiv_t w' \alpha G'$ and $\sigma=\matching(w,w')$.
Let $\btr(w \alpha G)$ and $\btr(w' \alpha G')$ be obtained by appending transitions
\[
(q_m, \zeta_m)  \xrightarrow{\alpha, g, \varrho} ~ (q, \zeta) \mbox{ and }
(q'_n, \zeta'_n)  \xrightarrow{\alpha, g', \varrho'} ~ (q', \zeta')
\]
to $\btr(w)$ and $\btr(w')$, respectively.
Then $q_m = q'_n$, $g \equiv g'$, $\varrho = \varrho'$, $q=q'$,
$G \equiv g[\iota]$, where $\iota = \zeta_m \cup \{ (p,v_{m+1}) \}$, and
$G' \equiv g'[\iota']$, where $\iota' = \zeta'_n \cup \{ (p,v_{n+1}) \}$.
We have to show that $G' \equiv G[\sigma]$, or equivalently $g[\sigma\circ\iota] = g[\iota']$.
Suppose $x \in \Var(g)$.
\begin{itemize}
	\item 
	If $x =p$ then $\sigma\circ\iota(x) = \sigma\circ\iota(p) = \sigma(v_{m+1}) = v_{n+1} = \iota'(p) = \iota'(x)$.
	\item 
	If $x \neq p$ then,
	by Corollary~\ref{variables guards always defined V2}, $x \in \domain(\zeta_m)$ and $x \in \domain(\zeta'_n)$.
	Let $v = \zeta_m(x)$ and $v' = \zeta'_n(x)$.  Then, by definition of $\equiv_r$, $(w, v) \equiv_r (w',v')$ and thus $\sigma(v) = v'$.
	Hence $\sigma \circ \iota (x) = \sigma \circ \zeta_m (x) = \sigma(v) = v' = \zeta'_m(x) = \iota'(x)$.
\end{itemize}
\item 
Condition~\ref{cond t target}. If $\btr(w)$ and $\btr(w')$ share the same final transition, they certainly share the same final location.
\item 
Condition~\ref{cond t parameter}. 
Assume $w \equiv_t w'$ and $w$ stores $v_m$.
Then there exists a variable $x$ such that $\zeta_m(x) = v_m$.
By the definition of symbolic runs,
$\zeta_m = \iota_m \circ \varrho_m$, where $\iota_m = \zeta_{m-1} \cup \{ (p, v_m) \}$.
By Lemma~\ref{lemma symbolic run}, $\range(\zeta_{m-1}) \subseteq \{ v_1,\ldots, v_{m-1} \}$.
We conclude that $\varrho_m(x) = p$.
Again by the definition of symbolic runs,
$\zeta'_n = \iota'_n \circ \varrho'_n$, where $\iota'_n = \zeta'_{n-1} \cup \{ (p, v_n) \}$.
Since $w \equiv_t w'$, we know $\varrho_m = \varrho'_n$.
Therefore $\zeta'_n (x) =  \iota'_n \circ \varrho'_n (x) =  \iota'_n \circ \varrho_m(x) = \iota'_n (p)= v_n$.
This implies $(w, v_m) \equiv_r (w', v_n)$, as required.
\item 
Condition~\ref{cond r forward propagation}. 
Assume that $u \equiv_t  u'$, $u = w \alpha G$, $u' = w' \alpha G'$, $u$ stores $v$, and $(w,v) \equiv_r (w',v')$.
Let $\btr(w \alpha G)$ and $\btr(w' \alpha G')$ be obtained by appending transitions
\[
(q_m, \zeta_m)  \xrightarrow{\alpha, g, \varrho} ~ (q, \zeta) \mbox{ and }
(q'_n, \zeta'_n)  \xrightarrow{\alpha, g', \varrho'} ~ (q', \zeta')
\]
to $\btr(w)$ and $\btr(w')$, respectively.
Then $\varrho = \varrho'$, 
$\zeta = \iota \circ \varrho$, where $\iota = \zeta_m \cup \{ (p,v_{m+1}) \}$, and
$\zeta' = \iota' \circ \varrho$, where $\iota' = \zeta'_n \cup \{ (p,v_{n+1}) \}$.
Since $(w,v) \equiv_r (w',v')$, there exists an $x \in V$ such that $\zeta_m(x) = v$ and $\zeta'_n(x) = v'$.
Thus also $\iota(x) = v$ and $\iota'(x) = v'$.
Since $u$ stores $v$, there exists an $y \in V$ such that $\zeta(y) = v$.
By Lemma~\ref{lemma injective valuations}, $\iota$ is injective.
Thus $\iota(\varrho(y)) = v$ and $\iota(x) = v$ implies $\varrho(y) = x$.
But this means $\zeta'(y) = \iota' \circ \varrho (y) = \iota'(x) = v'$.
Therefore $(u, v) \equiv_r (u', v')$.
\item
Condition~\ref{cond r backward propagation}. 
Assume $u \equiv_t  u'$, $u = w \alpha G$, $u' = w' \alpha G'$, $v \neq v_{m+1}$ and $(u,v) \equiv_r (u',v')$.
Let $\btr(w \alpha G)$ and $\btr(w' \alpha G')$ be obtained by appending transitions
\[
(q_m, \zeta_m)  \xrightarrow{\alpha, g, \varrho} ~ (q, \zeta) \mbox{ and }
(q'_n, \zeta'_n)  \xrightarrow{\alpha, g', \varrho'} ~ (q', \zeta')
\]
to $\btr(w)$ and $\btr(w')$, respectively.
Then $\varrho = \varrho'$, 
$\zeta = \iota \circ \varrho$, where $\iota = \zeta_m \cup \{ (p,v_{m+1}) \}$, and
$\zeta' = \iota' \circ \varrho$, where $\iota' = \zeta'_n \cup \{ (p,v_{n+1}) \}$.
Since $(u,v) \equiv_r (u',v')$, there exists an $x \in V$ such that $\zeta(x) = v$ and $\zeta'(x) = v'$.
Using $v \neq v_{m+1}$, we infer that there exists an $y \in V$ such that $\varrho(x) = y$ and $\zeta_m(y) = v$.
Now we derive $\zeta'_n(y) = \iota'(y) = \iota' \circ \varrho(x) = \zeta'(x) = v'$.
Therefore $(w, v) \equiv_r (w', v')$.
\item 
Condition~\ref{cond valuation defined for variables in guards}.
Assume $w \equiv_l w'$, $w \alpha G \in L$ and $v \in \Var(G)\setminus\{ v_{m+1} \}$.
Let $\btr(w \alpha G)$ be obtained by appending transition
\[
(q_m, \zeta_m)  \xrightarrow{\alpha, g, \varrho} ~ (q, \zeta)
\]
to $\btr(w)$.
Then $q_m = q'_n$ and $G \equiv g[\iota]$, where $\iota = \zeta_m \cup \{ (p,v_{m+1}) \}$.
Since $v \in \Var(G)\setminus\{ v_{m+1} \}$, there exists a variable $x \in \Var(g) \setminus \{ p \}$ with $\zeta_m(x) =v$.
By Corollary~\ref{variables guards always defined V2}, $\Var(g) \subseteq \domain(\zeta'_n) \cup \{ p \}$, and
thus $x \in \domain(\zeta'_n)$.  Let $v' = \zeta'_n(x)$.
Then $(w,v) \equiv_r (w',v')$.
\item 
Condition~\ref{cond right invariance}.
Assume that $w \equiv_l w'$, $w \alpha G \in L$, $\sigma=\matching(w, w')$ and $\satisfiable(\guard(w') \wedge  G[\sigma])$.
Since $w \alpha G \in L$, $\btr(w \alpha G)$ can be obtained by appending a transition
\[
(q_m, \zeta_m)  \xrightarrow{\alpha, g, \varrho} ~ (q, \zeta)
\]
to $\btr(w)$, with $G \equiv g[\iota]$, where $\iota = \zeta_m \cup \{ (p,v_{m+1}) \}$.
Since $w \equiv_l w'$, $q_m = q'_n$.
Now consider the sequence $\delta'$ obtained by appending a transition
\[
(q'_n, \zeta'_n)  \xrightarrow{\alpha, g, \varrho} ~ (q, \zeta')
\]
to $\btr(w')$, with $\zeta' = \iota' \circ \varrho$, where $\iota' = \zeta'_n \cup \{ (p,v_{n+1}) \}$.
Since $\guard(w') \wedge  G[\sigma]$ is satisfiable, we may conclude that $\delta'$ is a symbolic execution if we
can prove $G[\sigma] \equiv g[\iota']$, or equivalently $g[\sigma\circ\iota] = g[\iota']$.
Suppose $x \in \Var(g)$.
\begin{itemize}
	\item 
	If $x =p$ then $\sigma\circ\iota(x) = \sigma\circ\iota(p) = \sigma(v_{m+1}) = v_{n+1} = \iota'(p) = \iota'(x)$.
	\item 
	If $x \neq p$ then,
	by Corollary~\ref{variables guards always defined V2}, $x \in \domain(\zeta_m)$ and $x \in \domain(\zeta'_n)$.
	Let $v = \zeta_m(x)$ and $v' = \zeta'_n(x)$.  Then, by definition of $\equiv_r$, $(w, v) \equiv_r (w',v')$ and thus $\sigma(v) = v'$.
	Hence $\sigma \circ \iota (x) = \sigma \circ \zeta_m (x) = \sigma(v) = v' = \zeta'_m(x) = \iota'(x)$.
\end{itemize}
Hence $g[\sigma\circ\iota] = g[\iota']$ and $\delta'$ is a symbolic run for $w' \alpha G[\sigma]$.
We conclude $w' \alpha G[\sigma] \in L$.
\item 
Condition~\ref{cond determinism}. Suppose 
$w \equiv_l w'$, $ w \alpha G \in L$, $w' \alpha G' \in L$, $\sigma=\matching(w, w')$ and
$G[\sigma] \wedge G'$ is satisfiable.
Let $\delta = \btr(w \alpha G)$ and $\delta' = \btr(w' \alpha G')$ be obtained by appending transitions
\[
(q_m, \zeta_m)  \xrightarrow{\alpha, g, \varrho} ~ (q, \zeta) \mbox{ and }
(q'_n, \zeta'_n)  \xrightarrow{\alpha, g', \varrho'} ~ (q', \zeta')
\]
to $\btr(w)$ and $\btr(w')$, respectively.
Then 
$G \equiv g[\iota]$, where $\iota = \zeta_m \cup \{ (p,v_{m+1}) \}$, and
$G' \equiv g'[\iota']$, where $\iota' = \zeta'_n \cup \{ (p,v_{n+1}) \}$.
Since $G[\sigma] \wedge G'$ is satisfiable, there exists a valuation $\xi$ such that
\[
\xi \models G[\sigma] \wedge G'.
\]
Define variable renaming $\sigma'$ as follows
\begin{eqnarray*}
\sigma'(x) & = & \left\{ \begin{array}{ll} 
\iota'(x) & \mbox{if } x \in \Var(g')\\
\sigma \circ \iota(x) & \mbox{otherwise}
\end{array}\right.
\end{eqnarray*}
Then clearly $G' \equiv g'[\iota'] \equiv g'[\sigma']$.
We verify that $G[\sigma] \equiv g[\sigma\circ\iota]\equiv g[\sigma']$. Let $x \in \Var(g)$. Then
\begin{itemize}
	\item 
	If $x = p$ then $\sigma\circ\iota(x) = \sigma\circ\iota(p) = \sigma(v_{m+1}) = v_{n+1} = \iota'(p) = \iota'(x)$.
	\item 
	If $x \in\Var(g') \setminus \{ p \}$ then, by Corollary~\ref{variables guards always defined V2}, $x \in\domain(\zeta_m)$ and $x \in\domain(\zeta'_n)$.  Let $\zeta_m(x) = v$ and $\zeta'_n(x) = v'$. Then $(w,v) \equiv_r (w',v')$ and thus $\sigma(v) = v'$. Hence $\sigma\circ\iota(x) = \sigma\circ\zeta_m(x) = \sigma(v) = v' = \zeta'_n(x) = \iota'(x)$.
	\item 
	If $x \not\in\Var(g')$ then, by definition of $\sigma'$, $\sigma\circ\iota(x) = \sigma'(x)$.
\end{itemize}
Thus
\[
G[\sigma] \wedge G' \equiv (g \wedge g')[\sigma'].
\]
Therefore $\xi \models (g \wedge g')[\sigma']$ and, by Lemma~\ref{lemma variable renaming}, $\xi \circ \sigma' \models g \wedge g'$.
This means that $g \wedge g'$ is satisfiable.
Since $w \equiv_l w'$, $q_m = q'_n$.
Because $\A$ is required to be deterministic, the conjunction of the guards of any pair of distinct $\alpha$-transitions from
$q_m = q'_n$ is not satisfiable.  Therefore the final transitions of $\delta$ and $\delta'$ must be equal.
This implies $w \alpha G \equiv_t w' \alpha G'$. \qed
\end{itemize}
\else
We refer to the full version of this paper for a proof that, with this definition of $\equiv_l$, $\equiv_t$ and $\equiv_r$,
all 11 conditions of Table~\ref{table conditions} hold. \qed
\fi
\end{proof}

The following example shows that in general there is no coarsest location equivalence that satisfies all conditions of Table~\ref{table conditions}. So whereas for regular languages a unique Nerode equivalence exists, this is not always true for symbolic languages.

\begin{example}
	\label{example location equivalence not unique}
	Consider the symbolic language $L$ that consists of the following three symbolic words and their prefixes:
	\begin{eqnarray*}
		w  & ~~~=~~~ & a ~ v_1 > 0 ~ a ~ v_1 > 0 ~ b ~ \top\\	
		u &~~~ =~~~ & a ~ v_1 = 0 ~ a ~ v_1 = 0 ~ b ~ \top\\
		z &~~ =~~~ & a ~ v_1 < 0 ~ c ~ v_1 + v_2 = 0 ~ a ~ v_2 > 0 ~ c ~ \top
	\end{eqnarray*}
	Symbolic language $L$ is accepted by both automata displayed in Figure~\ref{fig: location equivalence not uniquely determined}. Thus, by Theorem~\ref{theorem soundness}, $L$ is regular.
	Let $w_i$, $u_i$ and $z_i$ denote the prefixes of $w$, $u$ and $z$, respectively, of length $i$.
	Then, according to the location equivalence induced by the first automaton, $w_1 \equiv_l u_1$, and
	according to the location equivalence induced by the second automaton, $u_1 \equiv_l z_2$.
	Therefore, if a coarsest location equivalence relation would exist, $w_1 \equiv_l z_2$ should hold.
	Then, by Condition~\ref{cond valuation defined for variables in guards},
	$(w_1,v_1) \equiv_r (z_2, v_2)$. Thus, by Lemma~\ref{lem determinism corollary}, $w_2 \equiv_t z_3$, and therefore,
	by Condition~\ref{cond t target}, $w_2 \equiv_l z_3$.
	But now Condition~\ref{cond right invariance} implies $a ~ v_1 > 0 ~ a ~ v_1 > 0 ~ c ~ \top \in L$, which is a contradiction.
	\begin{figure}[htb!]
		\begin{center}
			\begin{tikzpicture}[shorten >=1pt,node distance=3cm,on grid,auto] 
			\node[state,initial,accepting] (q_0)   {$q_0$}; 
			\node[state,accepting] (q_1) [ below=of q_0] {$q_1$};
			\node[state,accepting] (q_2) [ right=of q_0] {$q_2$}; 
			\node[state,accepting](q_3) [right=of q_1] {$q_3$};
			\node[state,accepting](q_4) [right=of q_2] {$q_4$};
			\node[state,accepting](q_5) [right=of q_3] {$q_5$};
			\node[state,accepting](q_6) [right=of q_4] {$q_6$};
			\path[->]
			(q_0) edge node [left] {$a, p<0, x:=p$} (q_1)
			(q_0) edge node {$a, p>0, x:=p$} (q_2)
			(q_1) edge [bend right] node [below] {$c, x+p=0, x:=p$} (q_3)
			(q_2) edge node {$a, x>0$} (q_4)
			(q_3) edge node {$a, x>0$} (q_5)
			(q_4) edge node {$b, \top$} (q_6)
			(q_5) edge node {$c, \top$} (q_6)
			(q_0) edge [bend right] node [below] {$a, p=0, x:=p$} (q_2)
			(q_2) edge [bend right] node [below] {$a, x=0$} (q_4)
			;
			\end{tikzpicture}
			
			\vspace{3em}
			\begin{tikzpicture}[shorten >=1pt,node distance=3cm,on grid,auto] 
			\node[state,initial,accepting] (q_0)   {$q_0$}; 
			\node[state,accepting] (q_1) [ below=of q_0] {$q_1$};
			\node[state,accepting] (q_2) [ right=of q_0] {$q_2$}; 
			\node[state,accepting](q_3) [right=of q_1] {$q_3$};
			\node[state,accepting](q_4) [right=of q_2] {$q_4$};
			\node[state,accepting](q_5) [right=of q_3] {$q_5$};
			\node[state,accepting](q_6) [right=of q_4] {$q_6$};
			\path[->]
			(q_0) edge node [left] {$a, p<0, x:=p$} (q_1)
			(q_0) edge node {$a, p>0, x:=p$} (q_2)
			(q_1) edge [bend right] node [below] {$c, x+p=0, x:=p$} (q_3)
			(q_2) edge node {$a, x>0$} (q_4)
			(q_3) edge node {$a, x>0$} (q_5)
			(q_4) edge node {$b, \top$} (q_6)
			(q_5) edge node {$c, \top$} (q_6)
			(q_0) edge node {$a, p=0, x:=p$} (q_3)
			(q_3) edge node [right] {$a, x=0$} (q_4)
			;
			\end{tikzpicture}
			\caption{There is no unique, coarsest location equivalence.}
			\label{fig: location equivalence not uniquely determined}
		\end{center}
	\end{figure}
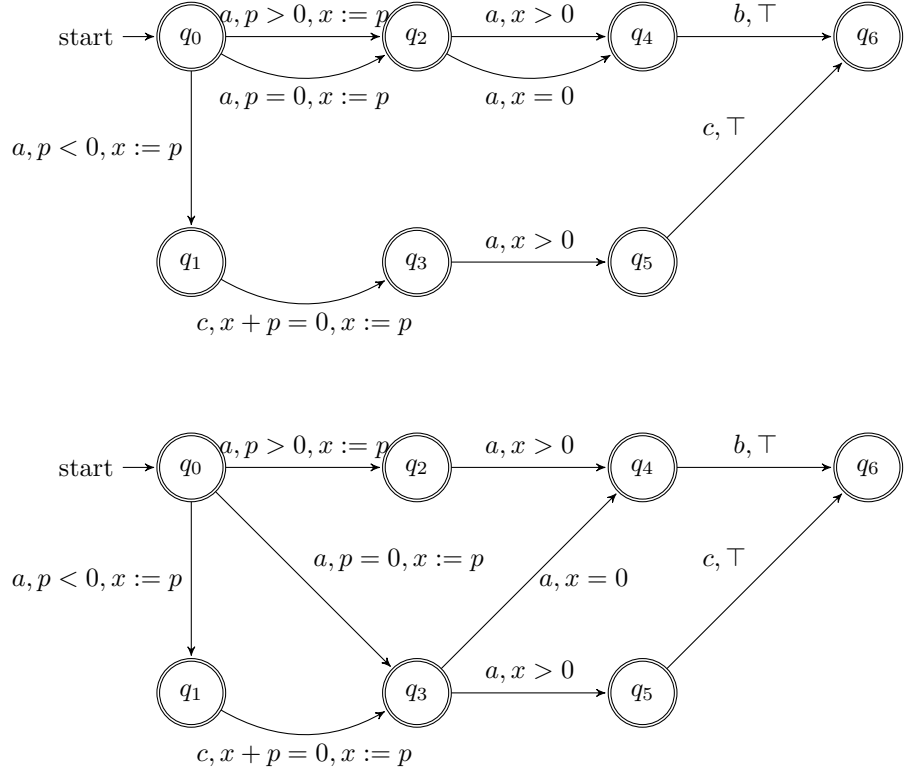
\end{example}

\begin{theorem}
	\label{theorem completeness}
	Suppose $L$ is a regular symbolic language over $\Sigma$.  
	Then there exists a register automaton $\A$ such that $L = L_s(\A)$.
\end{theorem}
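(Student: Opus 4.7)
The plan is a Myhill--Nerode quotient construction. Define $\A = (\Sigma, Q, q_0, V, \Gamma)$ by letting $Q := L/{\equiv_l}$, $q_0 := [\epsilon]_l$, and $V$ be the set of equivalence classes $[(w,v)]_r$ with $w$ storing $v$; these sets are finite by the index hypotheses. For every $\equiv_t$-class of $L \setminus \{\epsilon\}$, pick a representative $w \alpha G$ with $m = \length(w)$ and let $\iota_w : \V_p \rightharpoonup V \cup \{p\}$ send $v_{m+1}$ to $p$ and each $v_i$ stored by $w$ to $[(w,v_i)]_r$. The corresponding transition in $\Gamma$ has source $[w]_l$, target $[w \alpha G]_l$, input $\alpha$, guard $g := G[\iota_w]$ (well-defined on $\Var(G)$ by Conditions~\ref{cond valuation defined for variables in guards} and~\ref{cond r marker}), and assignment $\varrho$ sending a register $[(w \alpha G, v)]_r$ to $p$ if $v = v_{m+1}$, and otherwise to $[(w, v)]_r$, which lies in $V$ by Condition~\ref{cond r backward propagation} applied to $u = u' = w \alpha G$.

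First I would verify that all transition data is independent of the choice of representative. Given a second representative $w' \alpha G'$ with $\sigma = \matching(w, w')$, the crucial identity is $\iota_w = \iota_{w'} \circ \sigma$ on $\Var(G)$; combined with Condition~\ref{cond t guard} this gives $G[\iota_w] = G[\sigma][\iota_{w'}] = G'[\iota_{w'}]$, while the compatibility for $\varrho$ uses Conditions~\ref{cond t parameter}, \ref{cond r forward propagation} and~\ref{cond r backward propagation}, and injectivity of $\varrho$ is immediate from Condition~\ref{cond r marker}. Determinism of $\A$ follows from Condition~\ref{cond determinism}: if two distinct $\equiv_t$-classes produce transitions from $[w_1]_l$ on $\alpha$ with guards $g_i = G_i[\iota_{w_i}]$, then $g_1 \wedge g_2 = (G_1[\sigma] \wedge G_2)[\iota_{w_2}]$, and unsatisfiability of $G_1[\sigma] \wedge G_2$ lifts to $g_1 \wedge g_2$ by Lemma~\ref{lemma variable renaming}. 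Well-formedness is automatic, since $\Var(g)\setminus\{p\} \subseteq \range(\iota_w)\setminus\{p\}$ coincides with $\domain(\zeta_w)$ in the symbolic runs built below.

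The main correspondence is proved by induction on $\length(w)$: for each $w \in L$ there is a unique symbolic run $\delta_w$ of $\A$ with $\strace(\delta_w) = w$ ending in $([w]_l, \zeta_w)$, where $\zeta_w([(w,v)]_r) = v$ for every $v$ stored by $w$. The step extends $\delta_w$ through the transition associated with $[w\alpha G]_t$; the new symbolic guard $g[\iota_{\mathit{new}}]$ simplifies back to $G$ because $\iota_{\mathit{new}} \circ \iota_w$ is the identity on $\Var(G)$, and the new valuation satisfies the claimed correspondence by direct computation using the definition of $\varrho$. This yields $L \subseteq L_s(\A)$. For the converse, I would induct on the length of a symbolic trace $u$ of $\A$, with the base $\epsilon \in L$ provided by nonemptyness and prefix closure. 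Writing $u = u' \alpha G''$ and using the inductive hypothesis plus Lemma~\ref{lem for each tainted trace there is a unique symbolic run}, the prefix run of $u'$ agrees with $\delta_{u'}$; an explicit computation identifies $G''$ as $G[\matching(w,u')]$, where $w \alpha G$ is the chosen representative of the final transition's class. Since $\guard(u') \wedge G''$ is satisfiable (symbolic runs have feasible traces by Lemma~\ref{lemma tainted language register automaton is feasible}), Condition~\ref{cond right invariance} delivers $u = u' \alpha G'' \in L$.

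The main obstacle will be the second paragraph's well-definedness bookkeeping: the guard, assignment, target and input must each be checked to be invariant under choice of representative, and these checks are precisely the purpose of the seemingly technical Conditions~\ref{cond t input}--\ref{cond r backward propagation}. Once those are in place, the inductive correspondence is a guided calculation in which Condition~\ref{cond right invariance} supplies the key closure step for the backward inclusion and Condition~\ref{cond determinism} ensures that the forward construction is actually a run of a deterministic automaton.
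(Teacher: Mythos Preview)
Your proposal is correct and follows essentially the same quotient construction as the paper: locations, registers, and transitions are the $\equiv_l$-, $\equiv_r$-, and $\equiv_t$-classes, guards and assignments are defined via the renaming you call $\iota_w$ (the paper's $\tau$), and the two inclusions are proved by the same induction with Condition~\ref{cond right invariance} supplying the key step for $L_s(\A)\subseteq L$. The only noteworthy differences are cosmetic: you invoke Lemma~\ref{lem for each tainted trace there is a unique symbolic run} to reuse the forward invariant in the backward direction (the paper re-derives it), and you obtain injectivity of $\varrho$ from Condition~\ref{cond r marker} via the fixed-representative parametrization whereas the paper argues via Conditions~\ref{cond t parameter} and~\ref{cond r forward propagation}---both are valid.
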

\begin{proof}
	\iflong
	Let
		\else
	\emph{(Outline)} Let
	\fi
  $\equiv_l, \equiv_t, \equiv_r$ be relations satisfying the properties stated in Definition~\ref{def regular}.
	We define register automaton $\A =(\Sigma, Q, q_0, V, \Gamma)$ as follows:
	\begin{itemize}
		\item 
		$Q = \{[w]_l \mid  w \in L\}$.\\
		(Since $L$ is regular, $\equiv_l$ has finite index, and so $Q$ is finite, as required.)
		\item 
		$q_0 = [\epsilon]_l$.\\
		(Since $L$ is regular, it is feasible, and thus nonempty and prefix closed.  Therefore, $\epsilon \in L$.)
		\item 
		$V = \{ [(w,v)]_r  \mid w \in L \wedge v \in \V \wedge w \mbox{ stores } v \}$.\\
		(Since $L$ is regular, the equivalence induced by $\equiv_r$ has finite index, and so $V$ is finite, as required.  Note that registers are supposed to be elements of $\V$, and equivalence classes of $\equiv_r$ are not.  Thus, strictly speaking, we should associate a unique register of $\V$ to each equivalence class of $\equiv_r$, and define $V$ in terms of those registers.)
		\item 
		$\Gamma$ contains a transition $\langle q, \alpha, g, \varrho, q' \rangle$ for each equivalence class $[w \alpha G]_t$, where
		\begin{itemize}
			\item 
			$q = [w]_l$\\
			(Condition~\ref{cond t source} ensures that the definition of $q$ is independent from the choice of representative $w \alpha G$.)
			\item
			(Condition~\ref{cond t input} ensures that input symbol $\alpha$ is independent from the choice of representative $w \alpha G$.)
			\item
			$g \equiv G[\tau]$ where $\tau$ is a variable renaming that satisfies, for $v \in\Var(G)$,
			\begin{eqnarray*}
				\tau (v) & = & \left\{ \begin{array}{ll}
					[(w,v)]_r & \mbox{if } w \mbox{ stores } v\\
					p & \mbox{if } v=v_{m+1} \wedge m = \length(w)
				\end{array} \right.
			\end{eqnarray*}
			(By Condition~\ref{cond valuation defined for variables in guards}, $w \mbox{ stores } v$, for any  $v \in \Var(G)\setminus \{ v_{m+1} \}$, so $G[\tau]$ is well-defined.  Condition~\ref{cond t guard} ensures that the definition of $g$ is independent from the choice of representative $w \alpha G$.)
			Also note that, by Condition~\ref{cond r marker}, $\tau$ is injective.)
			\item
			$\varrho$ is defined for each equivalence class $[(w' \alpha G', v')]_r$ with
			$w'\alpha G'\equiv_t w \alpha G$ and $w' \alpha G' \mbox{ stores } v'$. Let $n = \length(w')$. Then
			\begin{eqnarray*}
				\varrho ([(w' \alpha G', v')]_r) & = &  \left\{ \begin{array}{ll}  [(w',v')]_r &\mbox{if } w' \mbox{ stores } v'\\
					p & \mbox{if } v' = v_{n+1}
				\end{array} \right.
			\end{eqnarray*}
			(By Condition~\ref{cond r backward propagation}, either $v' = v_{n+1}$ or $w'$ stores $v'$, so $\varrho ([(w' \alpha G', v')]_r)$ is well-defined. Also by Condition~\ref{cond r backward propagation}, the definition of $\varrho$ does not depend on the choice of representative $w'\alpha G'$.
			By Conditions~\ref{cond t parameter} and \ref{cond r forward propagation}, assignment $\varrho$ is injective.)
			\item
			$q' = [w \alpha G]_l$\\
			(Condition~\ref{cond t target} ensures that the definition of $q'$ is independent from the choice of representative $w \alpha G$.)
		\end{itemize}
		Since $L$ is regular, $\equiv_t$ has finite index and therefore $\Gamma$ is finite, as required.
		
		\iflong
		Note that in fact there exists a one-to-one correspondence between equivalence classes of $\equiv_t$ and the transitions in $\Gamma$.  Because suppose $w \alpha G \in L$ and $w' \alpha' G' \in L$ induce the same transition $\langle q, \alpha'', g, \varrho, q' \rangle$.
		Then $q = [w]_l = [w']_l$ and thus $w \equiv_l w'$.
		Also $\alpha = \alpha'' = \alpha'$ and thus $\alpha = \alpha'$.
		Moreover, $G[\tau] \equiv G'[\tau']$ (with $\tau'$ defined as expected). Now observe that
		$G[\tau] \equiv G[\sigma][\tau']$, for $\sigma = \matching(w,w')$.
		Thus we have $G[\sigma][\tau'] \equiv G'[\tau']$.  Since $\tau'$ is injective, this implies $G[\sigma] \equiv G'$.
		Now Lemma~\ref{lem determinism corollary} implies $w \alpha G  \equiv_ tw' \alpha' G'$.
		So each transition of $\Gamma$ corresponds to exactly one equivalence class of $\equiv_t$. 
		\fi
	\end{itemize}
	We claim that $\A$ is deterministic and prove this by contradiction.
	Suppose $\langle q, \alpha, g', \varrho', q' \rangle$ and $\langle q, \alpha, g'', \varrho'', q'' \rangle$ are two distinct $\alpha$-transitions in $\Gamma$ with $g' \wedge g''$ satisfiable.
	Then there exists a valuation $\xi$ such that $\xi \models g' \wedge g''$.
	Let the two transitions correspond to (distinct) equivalence classes $[w' \alpha G']_t$ and $[w'' \alpha G'']_t$, respectively. Then $g' = G'[\tau']$ and $g'' = G''[\tau'']$, with $\tau'$ and $\tau''$ defined as above.
	Now observe that $G'[\tau'] \equiv G'[\sigma][\tau'']$, for $\sigma = \matching(w',w'')$.  Using Lemma~\ref{lemma injective valuations}, we derive
	\[
	\xi \models g' \wedge g''  \Leftrightarrow
	\xi \models G'[\sigma][\tau''] \wedge G''[\tau''] \Leftrightarrow 
	\xi \models (G'[\sigma] \wedge G'')[\tau''] \Leftrightarrow 
	\xi \circ \tau'' \models G'[\sigma] \wedge G''.
	\]
	Thus $G'[\sigma] \wedge G''$ is satisfiable and we may apply Condition~\ref{cond determinism} to conclude
	$w' \alpha G' \equiv_t w'' \alpha G''$. Contradiction.
	
	So using the assumption that $L$ is regular, we established that $\A$ is a register automaton. Note that for this we essentially use that equivalences $\equiv_l$, $\equiv_t$ and $\equiv_r$ have finite index, as well as all the conditions, except Condition~\ref{cond right invariance}.
	\iflong
	
	It remains to prove $L = L_s(\A)$. 
	First, we show that $L \subseteq L_s(\A)$. 
	For this, suppose that $w = \alpha_1 G_1 \cdots\alpha_n G_n \in L$. We need to prove $w \in L_s(\A)$.  Consider the following sequence
	\[
	\delta ~=~ (q_0, \zeta_0) ~ \xrightarrow{\alpha_1, g_1, \varrho_1} ~ 
	(q_1, \zeta_1) ~
	\ldots ~ \xrightarrow{\alpha_n, g_n, \varrho_n} ~
	(q_n, \zeta_n),
	\]
	where $q_0 = [ w_0 ]_l$, $w_0 = \epsilon$, $\domain(\zeta_0) = \emptyset$ and, for $1 \leq i \leq n$,
	\begin{itemize}
		\item 
		$q_i = [ w_i ]_l$, where $w_i = \alpha_1 G_1 \cdots \alpha_i G_i$,
		\item
		$\langle q_{i-1}, \alpha_i, g_i, \varrho_i, q_i \rangle$ is the transition associated to $[w_i]_t$,
		\item
		$\zeta_i = \iota_i \circ \varrho_i$, where $\iota_i = \zeta_{i-1} \cup \{ (p, v_i) \}$.
	\end{itemize}
	Since $L$ is feasible, $G_1 \wedge \cdots \wedge G_n$ is satisfiable. Therefore, in order to prove that $\delta$ is a symbolic run of $\A$, it suffices to show,
	for $1 \leq i \leq n$,
	\begin{eqnarray*}
		G_i & \equiv & g_i [ \iota_i ].
	\end{eqnarray*} 
	Suppose $w_i$ stores $v$. Then, for $i>0$,
	\begin{eqnarray*}
		\varrho_i ([(w_i, v)]_r) & = &  \left\{ \begin{array}{ll}  [(w_{i-1},v)]_r &\mbox{if } w_{i-1} \mbox{ stores } v\\
			p & \mbox{if } v' = v_i
		\end{array} \right.
	\end{eqnarray*}
	By induction on $i$ we prove that $w_i \mbox{ stores } v \Rightarrow \zeta_i([w_i, v)]_r) = v$.
	\begin{itemize}
		\item 
		Base $i=0$. Trivial since $w_0$ does not store any $v$.
		\item
		Induction step. Assume $i>0$ and $w_i$ stores $v$.  We consider two cases:
		\begin{itemize}
			\item 
			$v = v_i$. Then $\zeta_i ( [(w_i, v)]_r ) = \iota_i \circ \varrho_i ( [(w_i, v_i)]_r) = \iota_i (p) = v_i =v$.
			\item
			$w_{i-1}$ stores $v$. Then
			\begin{eqnarray*}
				\zeta_i ( [(w_i, v)]_r ) & = &  \iota_i \circ \varrho_i ( [(w_i, v)]_r) = \iota_i ([(w_{i-1},v)]_r) \\
				& = & \zeta_{i-1} ([(w_{i-1},v)]_r) = v \mbox{ (by induction hypothesis)}.
			\end{eqnarray*}
		\end{itemize}
	\end{itemize}
	By definition $g_i \equiv G_i [\tau_i]$, where for $v \in \Var(G_i)$,
	\begin{eqnarray*}
		\tau_i (v) & = & \left\{ \begin{array}{ll}
			[(w_{i-1},v)]_r & \mbox{if } w_{i-1} \mbox{ stores } v\\
			p & \mbox{if } v = v_i
		\end{array} \right.
	\end{eqnarray*}
	This means we need to prove $G_i \equiv G_i [\tau_i] [ \iota_i ]$, that is, we must show, for $v \in \Var(G_i)$, that $\iota_i (\tau_i(v)) = v$. There are two cases:
	\begin{itemize}
		\item 
		If $v = v_i$ then $\iota_i (\tau_i(v)) = \iota_i (p) = v_i = v$.
		\item
		If $w_{i-1}$ stores $v$  then $\iota_i (\tau_i(v)) = \iota_i ( [(w_i, v)]_r) =  \zeta_i ( [(w_i, v)]_r) = v$.
	\end{itemize}
	We conclude that $\delta$ is a symbolic run with $\strace(\beta) = w$.
	Since $w \in L$, $q_n = [w]_l \in F$, so symbolic run $\beta$ is accepting, and thus $w \in L_s(\A)$, as required.
	
Next we need to show that $L_s(\A)\subseteq L$.
	For this, suppose $w = \alpha_1 G_1 \cdots\alpha_n G_n \in L_s(\A)$. We need to prove $w \in L$.
	Let
	\[
	\delta ~=~ (q_0, \zeta_0) ~ \xrightarrow{\alpha_1, g_1, \varrho_1} ~ 
	(q_1, \zeta_1) ~
	\ldots ~ \xrightarrow{\alpha_n, g_n, \varrho_n} ~
	(q_n, \zeta_n),
	\]
	be a symbolic run of $\A$, as in Definition~\ref{def symbolic semantics}, with $\strace(\delta) = w$. 
	For $0 < i \leq n$, suppose transition $\langle q_{i-1}, \alpha_i, g_i, \varrho_i, q_i \rangle$ corresponds to
	equivalence class $[ u_{i-1} \alpha_i G'_i]_t$.
	For $0 \leq i \leq n$, let $w_i = \alpha_1 G_1 \cdots \alpha_i G_i$. 

	We prove by induction that $q_i = [w_i]_l$ and $w_i$ stores $v \Rightarrow \zeta_i ([(w_i,v)]_r) = v$.
	\begin{itemize}
		\item 
		Base $i=0$. Trivial, since $q_0 = [\epsilon]_l  = [w_0]_l$ and $\domain(\zeta_0) = \emptyset$ by definition.
		\item
		Induction step. Assume $i>0$. Since transition $q_{i-1} \xrightarrow{\alpha_i, g_i, \varrho_i} q_i$ corresponds to equivalence class $[u_{i-1} \alpha_i G'_i]_t$, $q_{i-1} = [u_{i-1}]_l$.
		Therefore, by induction hypothesis, $u_{i-1} \equiv_l w_{i-1}$.
		By Definition~\ref{def symbolic semantics}, $G_i \equiv g_i [\iota_i]$ and by definition of $\A$, $g_i \equiv G'_i[\tau]$, where for each $v \in\Var(G'_i)$,
		\begin{eqnarray*}
			\tau (v) & = & \left\{ \begin{array}{ll}
				[(u_{i-1},v)]_r & \mbox{if } u_{i-1} \mbox{ stores } v\\
				p & \mbox{if } v=v_{m+1}
			\end{array} \right.
		\end{eqnarray*}
		where $m = \length(u_{i-1})$. Thus $G_i \equiv  G'_i[\iota_i \circ \tau]$.
		Let $\sigma = \matching(u_{i-1}, w_{i-1})$. Then, for each $v \in\Var(G'_i)$, $\iota_i \circ \tau(v) = \sigma(v)$:
		\begin{itemize}
			\item 
			If $v=v_{m+1}$ then
			$\iota_i \circ \tau (v) = \iota_i (p) = v_i = \sigma(v)$.
			\item 
			If $v \neq v_{m+1}$ then, by Condition~\ref{cond valuation defined for variables in guards}, there exists a $v'$ such that $(u_{i-1},v) \equiv_r (w_{i-1}, v')$. Then, again by induction hypothesis,
			\[
			\iota_i \circ \tau(v) = \iota_i ([(u_{i-1},v)]_r) = \zeta_{i-1} ([(u_{i-1},v)]_r) = \zeta_{i-1}([(w_{i-1}, v')]_r) = v' = \sigma(v).
			\]
		\end{itemize}
		Therefore $G_i \equiv G'_i[\sigma]$.
		Since $\delta$ is a symbolic run, $\guard(w_{i-1}) \wedge G_i$ is satisfiable.
		Now we may use Condition~\ref{cond right invariance} to conclude $w_i = w_{i-1} \alpha_i G_i \in L$.
		Then, by Lemma~\ref{lem determinism corollary}, $u_{i-1} \alpha_i G'_i \equiv_t w_i$, and thus, by Condition~\ref{cond t target}, $u_{i-1} \alpha_i G'_i \equiv_l w_i$. From this, we conclude $q_i = [w_i]_l$.
		
			Suppose $w_i$ stores $v$. Since $u_{i-1} \alpha_i G'_i \equiv_t w_i$,
		\begin{eqnarray*}
			\varrho_i ([(w_i, v)]_r) & = &  \left\{ \begin{array}{ll}  [(w_{i-1},v)]_r &\mbox{if } w_{i-1} \mbox{ stores } v\\
				p & \mbox{if } v' = v_i
			\end{array} \right.
		\end{eqnarray*}
		 Assume $w_i$ stores $v$.  We consider two cases:
			\begin{itemize}
				\item 
				$v = v_i$. Then $\zeta_i ( [(w_i, v)]_r ) = \iota_i \circ \varrho_i ( [(w_i, v_i)]_r) = \iota_i (p) = v_i =v$.
				\item
				$w_{i-1}$ stores $v$. Then, using the induction hypothesis,
				\begin{eqnarray*}
					\zeta_i ( [(w_i, v)]_r ) & = &  \iota_i \circ \varrho_i ( [(w_i, v)]_r) = \iota_i ([(w_{i-1},v)]_r) \\
					& = & \zeta_{i-1} ([(w_{i-1},v)]_r) = v.
				\end{eqnarray*}
			\end{itemize}
		
	\end{itemize}
Thus in particular $q_n = [w_n]_l = [w]_l$. This implies $w \in L$, as required.

As a final note, we observe that $\A$ is well-formed.
Because suppose $\delta$ is a symbolic run that ends with $(q, \zeta)$ and suppose
$q \xrightarrow{\alpha, g \varrho} q'$.
Let transition $q \xrightarrow{\alpha, g \varrho} q'$ correspond to equivalence class $[w \alpha G]_t$.
Suppose $x \in \Var(g)$. Then, by construction of $\A$, there is a variable $v \in \Var(G)$ such that either
$x = [(w,v)]_r$ and $w$ stores $v$, or $x = p$ and $v = v_{m+1}$, where $m = \length(w)$.
Let $w' = \strace(\delta)$.
By the above inductive proof, $q = [w']_l$ and $w'$ stores $v' \Rightarrow [(w',v')]_r \in \domain(\zeta)$.
Then $w \equiv_l w'$ and by Condition~\ref{cond valuation defined for variables in guards}, either $v = v_{m+1}$ or
there exists a $v'$ such that $(w,v) \equiv_r (w',v')$.
This means that either $x=p$ or $x \in\domain(\zeta)$.
Hence we may conclude that $\Var(g) \subseteq \domain(\zeta) \cup \{ p \}$ and thus $\A$ is well-formed by Corollary~\ref{variables guards always defined V2}. \qed
\else
We claim $L = L_s(\A)$.
\fi
\end{proof}

\section{Concluding Remarks}
We have shown that register automata can be defined in a natural way \emph{directly} from a regular symbolic language, with locations materializing as equivalence classes of a relation $\equiv_l$, transitions as equivalence classes of a relation $\equiv_t$, and registers as equivalences classes of a relation $\equiv_r$.

It is instructive to compare our definition of regularity for symbolic languages with Nerode's original definition for non-symbolic languages. Nerode defined his equivalence for all words $u, v \in \Sigma^{\ast}$ (not just those in $L$!) as follows:
\begin{eqnarray*}
	u \equiv_l v & \Leftrightarrow & (\forall w \in \Sigma^{\ast} : u w \in L \Leftrightarrow v w \in L).
\end{eqnarray*}
For any language $L \subseteq \Sigma^{\ast}$, the equivalence relation $\equiv_l$ is uniquely determined and can be used (assuming it has finite index) to define a unique minimal finite automaton that accepts $L$.
As shown by Example~\ref{example location equivalence not unique}, the equivalence $\equiv_l$ and its corresponding register automaton are not uniquely defined in a setting of symbolic languages.  For such a setting, it makes sense to consider a symbolic variant of what Kozen \cite{Kozen97} calls \emph{Myhill-Nerode relations}. These are relations that satisfy the following three conditions,
for $u, v \in \Sigma^{\ast}$ and $\alpha \in \Sigma$,
\begin{eqnarray}
u \equiv_l v & \Rightarrow & (u  \in L \Leftrightarrow v  \in L) \label{accepting}\\
u \equiv_l v & \Rightarrow & u \alpha \equiv_l v \alpha \label{right invariance}\\
\equiv_l & \mbox{has} & \mbox{finite index} 
\end{eqnarray}
Note that Conditions ~\ref{accepting} and \ref{right invariance} are consequences of Nerode's definition. Condition \ref{right invariance} is the well-known right invariance property, which is sound for non-symbolic languages, since finite automata are completely specified and every state has an outgoing $\alpha$-transition for every $\alpha$. A corresponding condition
\begin{eqnarray*}
u \equiv_l v & \Rightarrow & u \alpha G \equiv_l v \alpha G 
\end{eqnarray*}
for symbolic languages would not be sound, however, since locations in a register automaton do not have outgoing transitions for every possible symbol $\alpha$ and every possible guard $G$.
We see basically two routes to fix this problem. The first route is to turn $\equiv_l$ into a partial equivalence relation that is only defined for symbolic words that correspond to runs of the register automaton. Right invariance can then be stated as
\begin{eqnarray}
	\label{right invariance PER}
	&& w \equiv_l w'  \wedge w \alpha G \equiv_l w \alpha G \wedge \sigma=\matching(w, w')  \wedge  w' \alpha G[\sigma] \equiv_l w' \alpha  G[\sigma] \nonumber\\
	&& \quad\quad  \Rightarrow  w \alpha G \equiv_l w' \alpha G[\sigma].
\end{eqnarray}
The second route is to define $\equiv_l$ as an equivalence on $L$ and restrict attention to prefix closed symbolic languages.  This allows us to drop Condition~\ref{accepting} and leads to the version of right invariance that we stated as Condition~\ref{cond right invariance}.
Since prefix closure is a natural restriction that holds for all the application scenarios we can think of, and since equivalences are conceptually simpler than PERs, we decided to explore the second route in this article. However, we conjecture that the restriction to prefix closedness is not essential, and Myhill-Nerode characterization for symbolic trace languages without this restriction can be obtained using Condition~\ref{right invariance PER}.

An obvious research challenge is to develop a learning algorithm for symbolic languages based on our Myhill-Nerode theorem. Since for symbolic languages there is no unique, coarsest Nerode congruence that can be approximated, as in Angluin's algorithm \cite{Ang87}, this is a nontrivial task. We hope that for register automata with a small number of registers, an active algorithm can be obtained by encoding symbolic traces and register automata as logical formulas, and using SMT solvers to generate hypothesis models, as in \cite{SmetsersFV18}.

As soon as a learning algorithm for symbolic traces has been implemented, it will be possible to connect the implementation with the setup of \cite{TaintingRALib}, which extracts symbolic traces from Python programs using an existing tainting library for Python. We can then compare its performance with the grey-box version of the RALib tool \cite{TaintingRALib} on a number of benchmarks, which include data structures from Python's standard library.
An area where learning algorithms for symbolic traces potentially can have major impact is the inference of behavior interfaces of legacy control software. As pointed out in \cite{JasperMMSHSSHSK19},
such interfaces allow components to be developed, analyzed, deployed and maintained in isolation. This is achieved using enabling techniques, among which are model checking (to prove interface compliance), observers (to check interface compliance), armoring (to separate error handling from component logic) and test generation (to increase test coverage).
Recently, automata learning has been applied to 218 control software components of ASML’s TWINSCAN lithography machines \cite{YangASLHCS19}. Using black-box learning algorithms in combination with information from log files, 118 components could be learned in an hour or less. The techniques failed to successfully infer the interface protocols of the remaining 100 components. It would be interesting to explore whether grey-box learning algorithm can help to learn models for these and even more complex control software components.

\paragraph{Acknowledgements}
We thank Joshua Moerman, Thorsten Wi{\ss}mann and the anonymous reviewers for valuable feedback on earlier versions of this article.

\section*{References}
\bibliographystyle{elsarticle-harv}
\bibliography{abbreviations,dbase}

\end{document}